\newcommand{\ket}[1]{\left| #1 \right\rangle}
\newcommand{\ud}[1]{\, \mathrm{d}#1}
\newcommand{\N}{\mathbb{N}}
\newcommand{\Z}{\mathbb{Z}}
\newcommand{\R}{\mathbb{R}}
\newcommand{\C}{\mathbb{C}}
\renewcommand{\P}{ \mathbb P }
\newcommand{\EXP}{\mathbb{E}}
\newcommand{\Var}{\mathbb{V}\mathrm{ar}}
\newcommand{\IND}[1] {{ \mathds{1}_{ #1 }} }
\newcommand{\bs}[1]{\boldsymbol{#1}}
\newcommand{\by}{\bs{y}}
\newcommand{\bK}{\bs{K}}
\newcommand{\bk}{\bs{k}}
\newcommand{\bA}{\bs{A}}
\newcommand{\ba}{\bs{a}}
\newcommand{\bd}{\bs{d}}
\newcommand{\bD}{\bs{D}}
\newcommand{\bDelta}{\bs{\Delta}}
\newcommand{\btheta}{\bs{\theta}}
\newcommand{\bO}{\bs{\Omega}}
\newcommand{\bU}{\bs{\mathcal{U}}}
\newcommand{\modtwo}{\text{mod}\, 2}
\newcommand{\hp}{\hphantom{-}1}
\DeclareMathOperator{\tr}{tr}
\DeclareMathOperator{\diag}{diag}
\DeclareMathOperator{\lat}{lat}
\DeclareMathOperator{\sspan}{span}
\newtheorem{theorem}{Theorem}
\newtheorem{lemma}{Lemma}
\newtheorem{example}{Example}
\newtheorem{corollary}{Corollary}
\newtheorem{definition}{Definition}
\newtheorem{proposition}{Proposition}
\def\BibTeX{{\rm B\kern-.05em{\sc i\kern-.025em b}\kern-.08em
    T\kern-.1667em\lower.7ex\hbox{E}\kern-.125emX}}
\begin{document}
\history{Date of publication xxxx 00, 0000, date of current version xxxx 00, 0000.}
\doi{xxxx}

\title{Expressiveness of Commutative Quantum Circuits:
A Probabilistic Approach}

\author{\uppercase{Jorge M. Ramirez}\authorrefmark{1},
\uppercase{Elaine Wong \authorrefmark{1}, Caio Alves \authorrefmark{1}, Sarah Chehade \authorrefmark{1} and Ryan Bennink}.\authorrefmark{1}}
\address[1]{Oak Ridge National Laboratory, Oak Ridge, TN, 37830, USA}

\tfootnote{This manuscript has been authored by UT-Battelle, LLC under Contract No. DE-AC05-00OR22725 with the U.S. Department of Energy. The United States Government retains and the publisher, by accepting the article for publication, acknowledges that the United States Government retains a non-exclusive, paid-up, irrevocable, world-wide license to publish or reproduce the published form of this manuscript, or allow others to do so, for United States Government purposes. The Department of Energy will provide public access to these results of federally sponsored research in accordance with the Department of Energy's Public Access Plan (http://energy.gov/downloads/doe-public-access-plan). This is the authors' version of the work. It is posted here for your personal use. The definitive version was published in the IEEE Transactions on Quantum Engineering, https://doi.org/10.1109/TQE.2024.3488518.}

\markboth
{Ramirez \headeretal: Expressiveness of Commutative Quantum Circuits}
{Ramirez \headeretal: Expressiveness of Commutative Quantum Circuits}

\corresp{Corresponding author: Jorge M. Ramirez (email: ramirezosojm@ornl.gov).}

\begin{abstract}
This study investigates the frame potential and expressiveness of commutative quantum circuits. Based on the Fourier series representation of these circuits, we express quantum expectation and pairwise fidelity as characteristic functions of random variables, and expressiveness as the recurrence probability of a random walk on a lattice. A central outcome of our work includes formulas to approximate the frame potential and expressiveness for any commutative quantum circuit, underpinned by convergence theorems in probability theory. We identify the lattice volume of the random walk as means to approximate expressiveness based on circuit architecture. In the specific case of commutative circuits involving Pauli-$Z$ rotations, we provide theoretical results relating expressiveness and circuit structure. Our probabilistic representation also provide means for bounding and approximately calculating the frame potential of a circuit through sampling methods.
\end{abstract}

\begin{keywords}
commutative quantum circuit,
expressiveness,
frame potential
\end{keywords}

\titlepgskip=-15pt

\maketitle

%%%%%%%%%%%%%%%%%%%%%%%%%%%%%%%
\section{Introduction}
\label{sec:intro}

\PARstart{Q}{uantum} computing carries the potential to facilitate accelerated computations, enhanced computational efficiency, and the resolution of hitherto intractable problems~\cite{Abhijith2022}.
However, near-term quantum devices are small (10-100 qubits) and prone to noise and other hardware errors, which limits their utility.
This has motivated the development of hybrid algorithms in which a quantum computer and a digital computer work together, each performing the tasks for which they are most suited.
This approach has the potential to facilitate new computational capabilities as qubit number and quality increase~\citep{Preskill2018}.

%%%
The vast majority of hybrid algorithms developed to date employ variational quantum circuits~\citep{Cerezo2021}.
In these algorithms, a quantum circuit applies parameterized operations or ``gates'' to a set of qubits to prepare a candidate solution to a problem.
The qubits are then measured and the digital computer uses the measurement results to adjust the circuit parameters in order to improve the state produced by the quantum circuit. 
Variational quantum algorithms have been developed and demonstrated for various problem domains such as optimization, quantum simulation, machine learning~\citep{Grimsley2019, Chowdhury2020, Skolik2022}.

%%%
A challenge in variational quantum algorithms is designing circuit forms that can produce a wide range of candidate solutions for a particular problem class~\citep{Du2020}. 
The range of quantum states or unitaries a variational circuit can produce is known as its \emph{expressiveness}~\citep{Du2020}, \emph{expressibility}~\citep{Sim2019, Holmes2022}, or capacity~\citep{Haug2021}. 
The greater the expressiveness, the more likely the circuit is able to produce a good solution to an arbitrary given problem.
Expressiveness is related to computational complexity~\citep{Morales2020} as well as trainability~\citep{Holmes2022}.
Expressiveness has been quantified in various ways, including the rank of the Jacobian matrix~\citep{Haghshenas2022} or quantum Fisher information matrix~\citep{Larocca2023}, distance to a quantum $t$-design~\citep{Sim2019, Holmes2022}, or closeness of a pairwise fidelity distribution to an ideal distribution~\citep[Section 3.1.1]{Sim2019}. In this paper we conceptualize expressiveness in terms of the \emph{frame potential} of the circuit. This approach has been used in \cite{Sim2019}, and is based on the observation that frame potentials can be used to estimate the non-uniformity of the set of states generated by a parametric quantum circuit. 

%%%

We compute and analyze the frame potential and expressiveness of a particular family of quantum circuits consisting of mutually commuting operators. These \emph{commutative quantum circuits} are of fundamental interest as they are one of the simplest classes of quantum circuits that cannot be efficiently simulated with digital computers~\citep{Shepherd2009, Bremner2011, Bremner2016}.
They also have shown significant potential as variational ansatze for applications including machine learning~\citep{Havlicek2019}. 

Our approach is probabilistic and leverages the Fourier series representation of commutative quantum circuits, as discussed in~\cite{Vidal2018} and~\cite{Casas2023}. We demonstrate that the quantum expectation and pairwise fidelity of these circuits can be understood as characteristic functions of distinct random variables whose values are related to the spectrum of the associated Hamiltonians. Consequently, we are able to compute expressiveness as the recurrence probability of a random walk on a discrete lattice defined by the circuit form.
This novel probabilistic representation brings to bear an additional set of tools for understanding and computing expressiveness of these circuits.
In contrast to the prevalent numerical methods used to characterize expressiveness in existing literature, our findings are anchored in rigorous mathematical proofs that apply to all instances of commutative quantum circuits. We further undertake a detailed analysis of commutative quantum circuits defined through Pauli-$Z$ rotations. 

Our most significant contribution are formulas to approximate the frame potential and the expressiveness of any commutative quantum circuit. These formulas are rooted in convergence theorems from the realm of probability theory. Our analysis unveils that the relationship between expressiveness and circuit architecture is mostly mediated by the volume of the lattice of the aforementioned random walk. Specializing to commutative quantum circuits whose Hamiltonians are  Pauli-$Z$ rotations, we derive results and provide illustrative examples that show how this volume depends on the circuit, providing insights about how to design circuits of high expressiveness.

We will be considering quantum operators acting on $n$ qubits, and specifically a family of unitaries $U(\btheta)$ parameterized by a vector of circuit parameters $\btheta$ uniformly distributed over $[-\pi,\pi]^N$ for some positive integer $N$, with $n\leq N\leq 2^n$. The initial quantum state is $|\psi_0\rangle$. The following definition introduces the main objects of our study~\citep{Sim2019}.

\begin{definition}\label{def_Exp}
Let $U(\btheta)$ be the unitary operator of an $n$-qubit quantum circuit parameterized by $\btheta \in [-\pi,\pi]^N$ where $n\leq N\leq 2^n$, and has a base quantum state $\psi_0$. The `fidelity' of $U$ is 
\begin{equation}
    F_U(\btheta,\btheta') := \left| \langle \psi_0 | U(\btheta)^\dagger  U(\btheta') | \, \psi_0 \rangle \right|^{2}.
\end{equation}
The `frame potential' of $U$ is
\begin{equation}\label{eq_Ft}
    \mathcal{F}_U(t) := \frac{1}{(2 \pi)^{2N}} \int_{[-\pi,\pi]^{2N}} F_U(\btheta,\btheta')^{t} \ud \btheta \ud \btheta'.
\end{equation}
The `expressiveness' of $U(\btheta)$ is
\begin{equation}\label{eq_defExp}
    \mathcal{E}_U(t) := \frac{\mathcal{F}_\text{Haar}(t)}{\mathcal{F}_U(t)},
\end{equation}
where
\begin{equation}\label{eq_defFHaar}
    \mathcal{F}_\text{Haar}(t) := \binom{2^n+t-1}{2^n-1}^{-1}.
\end{equation}
\end{definition}

The fidelity $F_U \in [0,1]$ in \cref{eq_Ft} quantifies the similarity between states produced by circuits with two different parameter settings. Thus for fixed $t>0$, the frame potential quantifies the average similarity between all pairs of states the circuit can produce.  A relatively small frame potential indicates that most producible states are dissimilar to each other, which intuitively corresponds to large expressiveness.  To obtain a more meaningful measure of expressiveness, we invert the frame potential and normalize it by $\mathcal{F}_\text{Haar}(t)$, the minimum possible frame potential \citep{Sim2019}. This is achieved when the distribution of output states is uniform (Haar invariant) over the entire Hilbert space. With this, $\mathcal{E}_U(t) \in [0,1]$ with larger values corresponding to greater expressiveness.  The order $t$ controls the sensitivity, with greater $t$ corresponding to greater sensitivity. 

In this paper we specialize on commutative quantum circuits. Namely, on unitary operators of the form 
\begin{equation}\label{def_U}
    U(\btheta) = e^{i  \theta_1 H_1} \cdots e^{i \theta_N H_N},
\end{equation} 
where $\btheta=(\theta_1,\dots,\theta_N) \in [-\pi,\pi]^N$ and the Hamiltonians $\{H_j\}_{j=1}^N$ are assumed to be commutative and diagonalizable with integer eigenvalues. In this case, the fidelity can be expressed as 
\begin{equation}\label{eq_FU2Dto1D}
    F_U(\btheta,\btheta') = F_U(\btheta'-\btheta) = \left| \langle \psi_0 | U(\btheta'-\btheta) | \psi_0 \rangle \right|^{2}.
\end{equation}
We thus focus on analyzing the quantum expectation 
\begin{equation}\label{def_fU}
    f_U(\btheta) = \langle \psi_0 | U(\btheta) | \psi_0 \rangle,
\end{equation}
from which the fidelity can be computed simply as $F_U(\btheta) = |f_U(\btheta)|^2$. The function $f_U$ can be written as a Fourier sum
\begin{align}\label{eq:Fourier_sum}
    f_U(\btheta) &= \sum_{\bk \in \bO} \hat{f}_U(\bk) e^{i \btheta \cdot \bk},\\
    \hat{f}_U(\bk) &= \frac{1}{(2\pi)^N}\int_{[-\pi,\pi]^N} f_U(\btheta) e^{-i \btheta . \bk} \ud \btheta
\end{align}
where $\bO$ is the `spectrum' and the `wave vectors' $\bk \in \bO$ are determined by the eigenvalues of the Hamiltonians~\citep{Vidal2018, Schuld2021}. The coefficients $\hat{f}_U(\bk)$ have been characterized for very general circuits in~\cite{Casas2023}, but here we will show that they can be understood as probabilities. Namely, that for commutative circuits, they satisfy $\hat{f}_U(\bk) > 0$ for all $\bk \in \bO$ and $\sum_{\bk \in \bO} \hat{f}_U(\bk) = 1$. Thus, we can write
\begin{equation}\label{eq_fUExp}
    f_U(\btheta) = \EXP e^{i \btheta \cdot K},
\end{equation}
where the random variable $K$ has probability mass function $\P(K = \bk) = \hat{f}_U(\bk)$, for all $\bk \in \bO$. 

The expectation operation $\EXP$ in \cref{eq_fUExp} is not a quantum expectation but rather a mathematical construct that writes $f_U$ as the \textit{characteristic function} of $K$, a random variable taking values on the spectrum $\bO$ of $U$. Our approach is based on this observation and uses tools of probability theory to analyze the fidelity, frame potential, and expressiveness of~$U$. 

The organization of this paper is as follows. In \cref{sec_General} we derive the representation \cref{eq_fUExp} for general commutative circuits. The properties of characteristic functions allow us to obtain in \cref{sec:FPasProb} a representation for the frame potential as $\mathcal{F}_U(t) = \P(W_t=0)$ where $W = \{W_t\}_{t=1}^\infty$ is a random walk on a lattice $\lat(W) \subseteq \Z^N$. In \cref{sec:clt} we apply the central limit theorem to obtain an approximate formula for $\mathcal{F}_U(t)$ in terms of the variance of $K$ and the volume $V_U$ of $\lat(W)$. \cref{sec_CommC,sec:calculations} specialize to commutative quantum circuits where each Hamiltonian $H_j$ is a Pauli-$Z$ rotation on an arbitrary subset of qubits (see \cref{def_HPauli} below). In this case, the circuit architecture is encoded in a binary matrix $\bA$ specifying which qubits are included on each rotation. In \cref{sec:ProbRepCC} we characterize the probabilistic description of $U$ in terms of $\bA$. \cref{sec:characterization,sec:ExtCases} include the derivation of formulas for the volume $V_U$ and some of its properties, along with means for computing bounds for the frame potential of any circuit with $n$ qubits. \cref{sec:calculations} includes illustrative examples and applications of our results to various quantum circuits. It highlights different ways of computing expressiveness, and its behavior as a function of the number of rotations. Lastly, in \cref{sec:outlook} we discuss our results, and provide interesting observations about the possibility of extending our probabilistic framework to non-commutative circuits.

%%%%%%%%%%%%%%%%%%%%%%%%%%%%%%%
\section{General Commutative Circuits}\label{sec_General}

We consider an operator $U$ of the form \cref{def_U} acting on $n$ qubits, defined by $N\geq n$ Hamiltonians $\{H_j\}_{j=1}^N$, and parameterized by an $N$-dimensional vector $\btheta \in [-\pi,\pi]^N$. Under the assumption that the Hamiltonians in \cref{def_U} are commutative and diagonalizable, it follows that they are simultaneously diagonalizable. Therefore, there exists an operator $Q$ such that $H_j = Q \Lambda_j Q^{\dag}$ for each $j$, where $\Lambda_j$ is a diagonal operator. Moreover, the quantum expectation $f_U$ in \cref{def_fU} can be written as $f_U(\btheta) = \langle  \psi_0 | Q^\dagger e^{i \theta_1 \Lambda_1} \dots e^{i \theta_N \Lambda_N} Q |  \psi_0 \rangle$. We thus assume, without loss of generality, that the operator has the form
\begin{equation}\label{def_Ucomm}
    U(\btheta) = e^{i  (\theta_1 H_1 + \dots + \theta_N H_N)},
\end{equation}
where each Hamiltonian $H_j$ is a diagonal operator with integer eigenvalues,
\begin{equation}
    H_j = \diag(k^{(j)}_1,\dots,k^{(j)}_{2^n}), \quad k_x^{(j)} \in \Z,
\end{equation}
$j=1,\dots, N$, $x = 1,\dots, 2^n.$
We further assume that $n\leq N$ and that the set $\{H_j\}_{j=1}^N$ is linearly independent.

\subsection{Probabilistic Representation}

Establishing that $f_U$ in \cref{def_fU} can be represented as the characteristic function of a random variable is achieved through various methods. First, $f_U$ is a positive-definite function: Let $\{z_k\}_{k\geq 1}$ and $\{\btheta_k\}_{k\geq 1}$ be finite sequences in $\C$ and $[-\pi,\pi]^N$ respectively, then 
\begin{align*}
    \sum_{k,k'} z_k z_{k'}^\dag f_U(\btheta_k - \btheta_{k'}) &= \sum_{k,k'} z_k z_{k'}^\dag f_U(\btheta_k)  f_U(\btheta_{k'})^\dag \\
    &=\left| \sum_{k} z_k f_U(\btheta_k ) \right|^2 \geq 0.
\end{align*}
Bochner's theorem then implies that $f_U$ is a characteristic function \citep{Bochner2005}. Namely, there exists a probability space with a random variable $K$ defined on it, such that \cref{eq_fUExp} holds. We can also directly construct $K$. For this, denote
\begin{equation}\label{def_kx}
    \bk_x := (k_x^{(1)},\dots,k_x^{(N)}), \quad x=1,\dots,2^n,
\end{equation}
and write $(\theta_1 H_1 + \dots + \theta_N H_N) = \diag(\btheta\cdot \bk_1,\dots,\btheta\cdot \bk_{2^n})$. Thus,
\begin{align}
    f_U(\btheta)  &= \tr\left(|\psi_0 \rangle \langle \psi_0| e^{i \diag(\btheta\cdot \bk_1,\dots,\btheta\cdot \bk_{2^n})}\right) \\
    &= \sum_{x=1}^{2^n} |\langle \psi_0|x\rangle|^2 e^{i  \btheta \cdot \bk_x}\\
    &= \EXP \left( e^{i  \btheta \cdot \bk_X}\right), \label{eq_fUEkX}
\end{align}
where $X$ is a random variable taking values on $\{1,\dots,2^n\}$ with $\P(X=x) = |\langle \psi_0|x\rangle|^2$. It will be useful to think of $\bk_X$ as a row chosen randomly from the matrix
\begin{equation}\label{def:bK}
    \bK := \begin{pmatrix}
         -& \bk_1 & -   \\
        & \vdots &\\
         -& \bk_{2^n}& - 
    \end{pmatrix} \in  \Z^{2^n \times N}.
\end{equation}
Lastly, let $\bO$ be the set of rows of $\bK$. In other words, $\bO = \{\bk_x\}_{x=1}^{2^n} \subset \Z^N$, or equivalently the matrix where each row of $\bK$ appears only once, and by our hypothesis having rank equal to $N$. If we define the random variable $K = \bk_X$ and take into account the possible repetitions of each wave vector, we have
\begin{equation}\label{eq_pmfK}
    \P(K = \bk) = \sum_{x: \bk_x = \bk}|\langle \psi_0|x\rangle|^2, \quad \bk \in \bO,
\end{equation}
and the probabilistic representation in \cref{eq_fUExp} holds. 

The moments of $K$ are determined by the Hamiltonians and the base state. The expected value of the $j$-th entry of the vector $K$ is
\begin{equation}
    \EXP K^{(j)} = \sum_{x=1}^{2^n} k_x^{(j)} |\langle \psi_0|x\rangle|^2 = \langle \psi_0 | H_j | \psi_0 \rangle, 
\end{equation}
namely, the quantum expected value of $H_j$ in the state $\psi_0$. Similarly, its covariance matrix is
\begin{equation}
    \Var(K)_{i,j}= \langle \psi_0| H_i H_j |\psi_0 \rangle - \langle \psi_0 | H_j | \psi_0 \rangle \langle \psi_0 | H_i | \psi_0 \rangle,
\end{equation}
for $i,j=1,\dots,N$.

A corresponding probabilistic representation for the fidelity $F_U$ and its powers follows from the properties of characteristic functions. Note that 
\begin{equation}\label{eq_FUExp}
    F_U(\btheta) = f_U(\btheta)f_U(\btheta)^\dag = \EXP e^{i  \btheta \cdot D},
\end{equation}
where $D = K-K'$ and $K,K'$ are assumed to be independent and identically distributed as \cref{eq_pmfK}. We denote the sample space of $D$ as
\begin{equation}\label{def_bDelta}
    \bDelta :=\{ \bk-\bk': \bk,\bk' \in \bO \} \subset \Z^N,
\end{equation}
its cardinality as $L$, and write $\bDelta = \{\bd_l\}_{l=1}^L$. Note that the zero vector $\bs{0}$ is always an element of $\bD$. Also, the distribution of $D$ is symmetric, namely for all $\bd \in \bDelta$, one has $-\bd \in \bDelta$  and $\P(D=\bd) = \P(D=-\bd)$ . It follows in particular that $D$ has mean $\EXP D = 0$ and covariance matrix $\Var(D) = 2\Var(K)$.

For powers of the fidelity, we can write
\begin{equation}%\label{eq_FUtExp}
    F_U(\btheta)^t  = \EXP e^{i  \btheta\cdot (D_1+\dots+D_t)}, \quad t=0,1,\dots,
\end{equation}
where $\{D_1,\dots,D_t\}$ are independent and distributed as $D$. We define the random walk
\begin{equation}\label{def_WD}
    W_t = D_1+ \dots +D_t, \quad t=1,2,\dots
\end{equation}
with independent increments distributed as $D$. The paths of $W$ start at the origin and lie on the lattice $\lat(W) \subseteq \Z^N$ generated by $\bDelta$,
\begin{equation}\label{eq_latWspan}
    \lat(W) = \sspan_{\N}(\bDelta) = \sspan_{\Z}(\bDelta), 
\end{equation}
namely the set of linear combinations of elements of $\bDelta$ with coefficients in $\N$ and $\Z$, respectively. The fact that both representations hold for $\lat(W)$ follows from the symmetry of the distribution of $D$. An important quantity associated with $W$ is the volume $V_U$ of its lattice, which can be computed as
\begin{equation}\label{def_VU}
    V_U = |\det(\bDelta^*)|,
\end{equation}
where $\bDelta^*$ is a matrix whose rows are selected from $\bDelta$ and has row space over the integers equal to $\lat(W)$. See \cref{fig:latticepic}.

\begin{figure}[ht]
   \centering
   \includegraphics[scale=0.5]{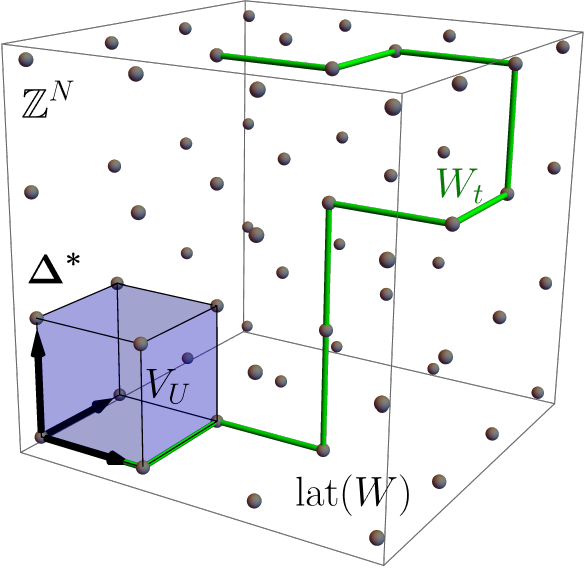}
   \caption{Schematic representation of the lattice $\lat(W) \subseteq \Z^N$ in \eqref{eq_latWspan}, its generator $\bDelta^*$ and volume $V_U$ in \cref{def_VU}, and a path of the random walk $W$ in \eqref{def_WD}.}
   \label{fig:latticepic}
\end{figure}

\subsection{The Frame Potential as a Probability}\label{sec:FPasProb}

Our first result establishes a representation for the frame potential $\mathcal{F}_U$ in terms of the random walk $W$.

\begin{theorem}\label{thm_main}
    Suppose $U$ is given by \cref{def_Ucomm} and let $W$ be the random walk in \cref{def_WD}. The frame potential of $U$ is
    \begin{equation}\label{eq_FPWt0}
        \mathcal{F}_U(t) = \P(W_t = 0), \quad t=1,2,\dots
    \end{equation}
\end{theorem}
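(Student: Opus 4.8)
The plan is to express the frame potential directly via its integral definition, substitute the characteristic-function representation of the fidelity, and recognize the resulting integral as extracting the probability that the random walk returns to the origin at time $t$. The key tool is the orthogonality of the characters $e^{i\btheta\cdot\bk}$ over the cube $[-\pi,\pi]^N$, which acts as a Fourier inversion that isolates the zero-frequency component.

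**First I would** recall from \cref{eq_FU2Dto1D} that the fidelity depends only on $\btheta'-\btheta$, so that by a change of variables the double integral in \cref{eq_Ft} collapses to a single integral,
\begin{equation*}
    \mathcal{F}_U(t) = \frac{1}{(2\pi)^N}\int_{[-\pi,\pi]^N} F_U(\btheta)^t \ud\btheta.
\end{equation*}
Next I would insert the representation $F_U(\btheta)^t = \EXP e^{i\btheta\cdot W_t}$, which follows from \cref{def_WD} together with the fact that $W_t = D_1+\dots+D_t$ is a sum of $t$ independent copies of $D$. Since $W_t$ takes values in the lattice $\lat(W)\subseteq\Z^N$, I can write $\EXP e^{i\btheta\cdot W_t} = \sum_{\bk\in\lat(W)} \P(W_t=\bk)\, e^{i\btheta\cdot\bk}$.

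**Then I would** interchange the (finite or absolutely summable) sum with the integral and invoke the orthogonality relation
\begin{equation*}
    \frac{1}{(2\pi)^N}\int_{[-\pi,\pi]^N} e^{i\btheta\cdot\bk}\ud\btheta = \IND{\bk=\bs{0}},\quad \bk\in\Z^N,
\end{equation*}
which holds because integrating each coordinate factor $e^{i\theta_j k_j}$ over $[-\pi,\pi]$ yields $2\pi$ when $k_j=0$ and $0$ otherwise (as $k_j\in\Z$). Applying this termwise kills every term except $\bk=\bs{0}$, leaving exactly $\P(W_t=\bs{0})$, which is the claim.

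**The main obstacle** I anticipate is purely a matter of rigor in justifying the interchange of sum and integral: one must confirm that the sum over the lattice is legitimate to move inside, which is immediate if $W_t$ has finite support (true here, since $\bDelta$ is finite and $W_t$ is a sum of $t$ such increments) or otherwise follows from dominated convergence since $\sum_{\bk}\P(W_t=\bk)=1$ and $|e^{i\btheta\cdot\bk}|=1$. Everything else is a mechanical application of character orthogonality, so no genuine difficulty is expected beyond bookkeeping the change of variables that reduces the $2N$-dimensional integral to an $N$-dimensional one.
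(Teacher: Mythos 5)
Your proof is correct, and it reaches the result by a cleaner route than the paper's own. Both arguments share the same core idea: write $F_U(\btheta)^t = \EXP\, e^{i\btheta\cdot W_t}$ and integrate to isolate $\P(W_t=0)$. Where you differ is in the reduction of the $2N$-dimensional integral and in the orthogonality step. The paper performs the explicit change of variables $(\bs{\varphi},\bs{\varphi}')=(\btheta'-\btheta,\,\btheta+\btheta')$ with its Jacobian factor $2^{-N}$ and the awkward region $R$, invokes symmetries of $F_U$ to shrink the domain to $[0,\pi]^N$, and then evaluates $\int_{[0,\pi]^N} e^{i\btheta\cdot\by}\ud\btheta$ by cases; over that half-cube the characters are \emph{not} exactly orthogonal (for an odd nonzero integer entry $y^{(j)}$ the one-dimensional factor is purely imaginary rather than zero), so the paper's computation implicitly leans on the symmetry $W_t \eqdstn -W_t$ to cancel those contributions. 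Your version sidesteps all of this: integrating over the full period cube $[-\pi,\pi]^N$ makes the orthogonality relation exact for every nonzero $\bk\in\Z^N$, with no case analysis and no appeal to the symmetry of $D$, and the interchange of sum and integral is trivial since $W_t$ has finite support. The one step you should make explicit is the collapse of the double integral to a single one: at fixed $\btheta$, the substitution $\bs{\varphi}=\btheta'-\btheta$ maps $[-\pi,\pi]^N$ to a \emph{shifted} cube, and you need $2\pi$-periodicity of $F_U$ in each coordinate to translate the domain back and conclude that the inner integral is independent of $\btheta$. This periodicity is guaranteed precisely by the integer-eigenvalue hypothesis, which places the spectrum $\bO$ (and hence the support of $D$) inside $\Z^N$; with that one line added, your argument is complete and arguably tidier than the published proof.
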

\begin{proof}
By virtue of \eqref{eq_FU2Dto1D}, under the change of variables $\bs{\varphi} = \btheta' - \btheta$, $\bs{\varphi'} = \btheta + \btheta'$, the frame potential is  
\begin{align*}
    \mathcal{F}_U(t) =\frac{1}{2^N(2 \pi)^{2N}} \int\int_{R} F_U(\bs{\varphi})^{t} \ud \bs{\varphi} \ud \bs{\varphi}'
\end{align*}
where the region of integration is 
\[
R = \{(\bs{\varphi},\bs{\varphi}') \in [-2\pi,2\pi]^{2N}: |\bs{\varphi}|+|\bs{\varphi}'| \leq 2\pi\}.
\]
The following symmetries of \(F_U\),
\[
    F(\bs{\varphi}) = F(-\bs{\varphi}) = F(\pi+\bs{\varphi}) = F(\pi-\bs{\varphi}),
\]
imply that
\begin{align*}
    \mathcal{F}_U(t) &= \frac{2^{2N}}{(2 \pi)^{2N}} \int\int_{[0,\pi]^{2N}} F_U(\bs{\varphi})^{t} \,d \bs{\varphi} \,d \bs{\varphi}'\\
    &= \frac{1}{\pi^N}  \int_{[0,\pi]^N} F_U(\bs{\varphi})^{t} \,d \bs{\varphi}.
\end{align*}
To compute the resulting integral, note that for \(\by \in \R^N\),
\begin{align*}
    \int_{[0,\pi]^N} & e^{i \btheta \cdot \by} \,  d \btheta \\
    & = \begin{cases}
        \pi^N, & \by=0,\\
        \prod_{j=1}^N \frac{2 \sin(\pi y^{(j)})}{y^{(j)}}, & y^{(j)} \in \R \setminus \Z \text{ for all } j,\\
        0, & y^{(j)} \in \Z \text{ for some } j.
    \end{cases}
\end{align*}
But since \(W_t\) takes only values with integer entries, 
\begin{align*}
    \int_{[0,\pi]^N} F_U(\btheta)^{t} \,d \btheta 
    & = \mathbb{E} \int_{[0,\pi]^N}  e^{i\btheta \cdot W_{t}} \,d \btheta \\
    & = \pi^N \mathbb{P}(W_t = 0)
\end{align*}
from which the result follows.
\end{proof}

As an initial application of Theorem \ref{thm_main}, we prove a useful corollary regarding the monotonicity of the frame potential with respect to the set of rotations. It confirms that augmenting the number of rotations of a circuit decreases the frame potential, which by virtue of \cref{def_Exp}, increases the expressiveness; an intuitive result which has been numerically verified (see Section \cref{sec:calculations}).

\begin{corollary}\label{cor:monotone}
Let the number $n$ of qubits be fixed, and $U$ be the operator in \cref{def_Ucomm}. Consider another circuit $U'$ with $N'>N$ but sharing the first $N$ factors with $U$, namely
\[
    U'(\btheta)=U(\theta_1,\dots,\theta_N)e^{i \theta_{N+1}H_{N+1}} \cdots e^{i  \theta_{N'} H_{N'}},
\]
for some Hamiltonians $H_{N+1},\dots, H_{N'}$ that commute, are diagonalizable, and have integer eigenvalues.  Then $\mathcal{F}_{U'}(t) \leq \mathcal{F}_U(t)$ for all $t$.
\end{corollary}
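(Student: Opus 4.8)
The plan is to reduce the claim to a coupling argument built on the probabilistic representation of \cref{thm_main}. That theorem gives $\mathcal{F}_U(t) = \P(W_t = \bs{0})$ and $\mathcal{F}_{U'}(t) = \P(\tilde{W}_t = \bs{0})$, where $W$ on $\Z^N$ and $\tilde{W}$ on $\Z^{N'}$ are the random walks associated with $U$ and $U'$. It therefore suffices to show $\P(\tilde{W}_t = \bs{0}) \leq \P(W_t = \bs{0})$ for every $t$.

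First I would pin down how the two spectra are related. Since $U'$ shares its first $N$ factors with $U$, and the appended Hamiltonians $H_{N+1},\dots,H_{N'}$ commute with one another and with $H_1,\dots,H_N$ and are diagonalizable, the full family is simultaneously diagonalizable in a common eigenbasis $\{|x\rangle\}_{x=1}^{2^n}$. Hence the wave vector attached to basis state $x$ for $U'$ is $\tilde{\bk}_x = (k_x^{(1)},\dots,k_x^{(N)},k_x^{(N+1)},\dots,k_x^{(N')}) \in \Z^{N'}$, whose first $N$ entries form exactly the wave vector $\bk_x \in \Z^N$ of $U$. The base state $\psi_0$ is unchanged, so the index distribution $\P(X = x) = |\langle \psi_0 | x \rangle|^2$ is the same for both circuits.

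Next I would set up the coupling. Let $\pi \colon \Z^{N'} \to \Z^N$ be the projection onto the first $N$ coordinates. Draw iid index pairs $(X_i, X_i')$ from the common distribution above, and define both sets of increments from the \emph{same} indices, $D_i = \bk_{X_i} - \bk_{X_i'}$ and $\tilde{D}_i = \tilde{\bk}_{X_i} - \tilde{\bk}_{X_i'}$, so that $W_t = \sum_{i=1}^t D_i$ and $\tilde{W}_t = \sum_{i=1}^t \tilde{D}_i$. By the coordinate relationship above, $\pi(\tilde{D}_i) = D_i$, hence $\pi(\tilde{W}_t) = W_t$ for all $t$. Consequently $\{\tilde{W}_t = \bs{0}\} \subseteq \{\pi(\tilde{W}_t) = \bs{0}\} = \{W_t = \bs{0}\}$, and taking probabilities gives $\P(\tilde{W}_t = \bs{0}) \leq \P(W_t = \bs{0})$, which is the assertion. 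Because the coupling makes no use of the specific values of the appended coordinates, it covers arbitrary $N' > N$ directly, with no need to iterate a single-factor step.

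The substance of the argument is conceptual rather than computational: the point to verify carefully is that appending commuting, diagonalizable Hamiltonians merely appends coordinates to each wave vector while leaving the base-state index distribution intact, so that $W$ is literally the projection $\pi(\tilde{W})$ of the larger walk. This is precisely where simultaneous diagonalizability of the enlarged family is needed; if the new factors failed to commute with the old ones there would be no common eigenbasis and the projection identity $\pi(\tilde{W}_t) = W_t$ --- the engine of the whole proof --- would fail. Once that identity is established, the monotonicity drops out immediately from the containment of events.
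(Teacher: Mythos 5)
Your proof is correct and follows essentially the same route as the paper's: both rest on the observation that the first $N$ coordinates of the walk for $U'$ reproduce the walk for $U$, after which the bound follows from dropping the extra coordinate constraints. Your explicit coupling (using common indices so that $\pi(\tilde{W}_t) = W_t$ exactly) is a cosmetic sharpening of the paper's purely distributional statement that the first $N$ components of $W'_t$ have the same law as $W_t$; the conclusion via event containment is identical.
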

\begin{proof}
We denote with prime the corresponding quantities associated with the operator $U'$. By construction, the first $N$ components of each $\bk'_x$ agree with those of $\bk_x$ in \cref{eq_fUEkX} for all $x=1,\dots,2^n$. Namely, the matrix $\bK'$ shares the same first $N$ columns with $\bK$ in \cref{def:bK}. Since $K'$ is obtained by randomly selecting rows from $\bK'$, its first $N$ components will have the same distribution as $K$. It follows that the first $N$ components of $W_t'$ have the same distribution as $W_t$, for all $t$. Therefore,
\begin{align*}
    \mathcal{F}_{U'}(t) &= \P(W_t'^{(1)}=0,\dots,W_t'^{(N)}=0,\dots W_t'^{(N')}=0),\\
    &\leq \P(W_t'^{(1)}=0,\dots,W_t'^{(N)}=0),\\
    & = \P(W_t^{(1)}=0,\dots,W_t^{(N)}=0) = \mathcal{F}_U(t).
\end{align*}
\end{proof}

\subsection{Asymptotic Behavior of the Frame Potential and Expressiveness}\label{sec:clt}

The probabilistic interpretation of $F_U$, along with the Central Limit Theorem, imply convergence in distribution of $W_t/\sqrt{t}$ to a multivariate Gaussian vector. Equivalently, by L\'evy's continuity theorem \cite[Section 18.1]{Williams1991}, one has point-wise convergence of the fidelity $F_U$,
\begin{equation}
    \lim_{t \to \infty} F_U(\btheta/\sqrt{t})^{t} = e^{-\frac{1}{2} (\btheta^{\top} \Var(D) \btheta) }, \;
    \btheta \in \R^N,
\end{equation}
where $\Var(D) = 2 \Var(K)$. The asymptotic behavior of the frame potential $\mathcal{F}_
U(t) = \P(W_t = 0)$ follows from an application of the corresponding Local Limit Theorem, for which we follow the theory outlined in \citet[Ch. II.7]{GikhmanSkorohod2004} and \citet[Ch. 2.3]{Lawler2006}.

\begin{theorem}\label{thm_CLTBound}
Suppose $U$ is given by \cref{def_Ucomm}. Let $K$ be the random variable in the probabilistic representation \cref{eq_fUExp} of $f_U$, and $V_U$ the volume of $\lat(W)$. Define the approximate frame potential by
\begin{equation}
    \tilde{\mathcal{F}}_U(t) = \frac{V_U}{(4\pi t)^{N/2} \sqrt{\det(\Var(K))}}.
\end{equation}
Then there exists $c>0$ such that $|\mathcal{F}_U(t) - \tilde{\mathcal{F}}_U(t)| < c ~t^{-(N/2+1)}$ for all $t>0$.
\end{theorem}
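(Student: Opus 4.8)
The plan is to prove this as a local central limit theorem (LLT) for the lattice random walk $W$, following the inversion-and-expansion scheme of \citet[Ch. II.7]{GikhmanSkorohod2004} and \citet[Ch. 2.3]{Lawler2006}. First I would rewrite the target probability $\mathcal{F}_U(t) = \P(W_t = 0)$ via the Fourier inversion formula on the lattice $\lat(W)$. Since $W_t$ is supported on $\lat(W)$, whose covolume is $V_U = |\det(\bDelta^*)|$, the inversion formula reads
\begin{equation*}
    \P(W_t = 0) = \frac{V_U}{(2\pi)^N}\int_{\mathcal{D}} F_U(\btheta)^t\, \ud\btheta,
\end{equation*}
where $F_U(\btheta) = \EXP e^{i\btheta\cdot D}$ is the characteristic function of a single step and $\mathcal{D}$ is a fundamental domain of the dual lattice of $\lat(W)$, with $\mathrm{vol}(\mathcal{D}) = (2\pi)^N/V_U$. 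The integrand is exactly the single-step fidelity raised to the power $t$, so the whole analysis reduces to the behaviour of $F_U$ on $\mathcal{D}$.

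The leading term comes from a neighbourhood of the origin. Because $D$ has mean zero and is symmetric, its odd moments vanish, so $\log F_U(\btheta) = -\tfrac12\btheta^\top \Var(D)\btheta + O(|\btheta|^4)$ with \emph{no} cubic term. Hence on a fixed ball $|\btheta|<\delta$ one has $F_U(\btheta)^t = \exp(-\tfrac{t}{2}\btheta^\top \Var(D)\btheta)(1 + O(t|\btheta|^4))$. Integrating the Gaussian $\exp(-\tfrac{t}{2}\btheta^\top \Var(D)\btheta)$ over all of $\R^N$ and using $\Var(D) = 2\Var(K)$ gives
\begin{equation*}
    \frac{V_U}{(2\pi)^N}\cdot\frac{(2\pi)^{N/2}}{t^{N/2}\sqrt{\det(2\Var(K))}} = \frac{V_U}{(4\pi t)^{N/2}\sqrt{\det(\Var(K))}} = \tilde{\mathcal{F}}_U(t),
\end{equation*}
which identifies the claimed main term.

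It then remains to collect three errors. (i) Extending the central Gaussian integral from $|\btheta|<\delta$ to all of $\R^N$ adds only an exponentially small tail. (ii) On the complement $\{\btheta\in\mathcal{D}: |\btheta|\geq\delta\}$, the walk is strongly aperiodic because $\lat(W) = \sspan_{\Z}(\bDelta)$ is by construction exactly the lattice generated by the support of $D$; thus $|F_U(\btheta)|\leq\rho<1$ there, and $\int |F_U|^t \leq \rho^t\,\mathrm{vol}(\mathcal{D})$ decays exponentially. (iii) The genuine correction is the factor $1+O(t|\btheta|^4)$ in the central region: under the rescaling $\btheta = u/\sqrt{t}$ (with $\ud\btheta = t^{-N/2}\ud u$ and $|\btheta|^4 = |u|^4/t^2$) the term $t|\btheta|^4$ contributes a relative factor of order $t^{-1}$, so its integral against the Gaussian is $O(t^{-(N/2+1)})$. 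Summing (i)--(iii) and absorbing the exponentially small pieces yields $|\mathcal{F}_U(t) - \tilde{\mathcal{F}}_U(t)| < c\,t^{-(N/2+1)}$.

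The main obstacle I anticipate is step (iii): obtaining the sharp rate $t^{-(N/2+1)}$ rather than the generic $t^{-(N+1)/2}$ requires genuinely exploiting the symmetry of $D$ to kill the order-$|\btheta|^3$ term in the expansion, and then controlling the quartic remainder \emph{uniformly} on the central window so that, after rescaling, its contribution is provably $O(t^{-(N/2+1)})$ with a constant independent of $t$. The strict bound $|F_U(\btheta)|<1$ on the punctured fundamental domain needed in (ii) must also be verified, but this follows directly from the identification of $\lat(W)$ as the integer span of $\bDelta$, which makes $W$ strongly aperiodic on its own lattice.
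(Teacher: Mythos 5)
Your proposal is correct, but it takes a genuinely different route from the paper. Where you re-derive the local limit theorem from scratch by Fourier inversion directly on $\lat(W)$, the paper instead reduces to a standard LLT on $\Z^N$ by a linear change of coordinates and cites it as a black box: it sets $\bs{\Psi}=(\bDelta^*)^{-1}$, passes to the transformed walk $\tilde{W}_t = W_t\bs{\Psi}$, verifies the cited theorems' hypothesis of complete irreducibility (using exactly the two facts you also rely on, namely $\sspan_{\Z}(\bDelta)=\lat(W)$ and $\bs{0}\in\bDelta$), applies Theorem 7.2 of \citet[Ch.~II.7]{GikhmanSkorohod2004} and Theorem 2.3.5 of \citet{Lawler2006}, and then recovers the volume algebraically via $\det(\Var(\tilde{D})) = \det(\bs{\Psi}^\dag \Var(D)\bs{\Psi}) = 2^N\det(\Var(K))/V_U^2$. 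Your approach buys self-containedness and makes the geometric origin of $V_U$ transparent --- it is the covolume entering the inversion formula over the dual fundamental domain, rather than emerging from a determinant identity --- but it carries the analytic burden you correctly flag in step (iii); the uniform quartic control is standardly discharged by the bound $\left|F_U(\btheta)^t - e^{-\frac{t}{2}\btheta^\top \Var(D)\btheta}\right| \leq C\, t|\btheta|^4 e^{-c t|\btheta|^2}$ on the central window, whose Gaussian integral is $O(t^{-(N/2+1)})$ after the rescaling $\btheta = u/\sqrt{t}$, while the paper outsources precisely these estimates to the references. One simplification is available to you in step (ii): since $F_U = |f_U|^2$ is real-valued in $[0,1]$, one has $F_U(\btheta) = \sum_{\bd\in\bDelta}\P(D=\bd)\cos(\btheta\cdot\bd)$, so $F_U(\btheta)=1$ forces $\btheta\cdot\bd \in 2\pi\Z$ for every $\bd$ in the support, i.e.\ $\btheta \in 2\pi\lat(W)^*$; the ``common phase'' periodicity pathology cannot occur (equivalently, $\P(D=\bs{0})>0$ rules it out, mirroring the paper's use of $\bs{0}\in\bDelta$ in its irreducibility check), and $F_U \leq \rho < 1$ on the punctured compact fundamental domain follows by continuity without any appeal to strong aperiodicity. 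Note finally that both arguments need the rank-$N$ (linear independence) assumption on the Hamiltonians: it makes $\bDelta^*$ invertible for the paper's coordinate change, and for you it guarantees that $\Var(D)$ is nondegenerate and that the full-rank lattice $\lat(W)$ has a well-defined covolume $V_U$.
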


\begin{proof}
If the Hamiltonians form a linearly independent set of operators, then both $\bK$ and $\bDelta$ have rank $N$ and the matrix $\bDelta^* \in \Z^{N\times N}$ in \cref{def_VU} is invertible. Denote $\bs{\Psi} = (\bDelta^*)^{-1}$ and consider the transformed random walk $\tilde{W}_t := W_t \bs{\Psi}$ with increments in the set $\tilde{\bDelta}:= \{\bd \bs{\Psi}, \bd \in \bDelta\} =: \{\tilde{\bd}_l\}_{l=1}^L$. We will show that the transformed random walk $\tilde{W}$ is `completely irreducible'. First note that by construction, each $\tilde{\bd}_l$ contains the coefficients to express $\bd_l$ as a linear combination of elements of $\bDelta$, and therefore $\tilde{\bd}_l \in \Z^N$. Also, $\tilde{\bDelta}$ is symmetric and contains the canonical basis for $\Z_N$. Therefore $\lat(\tilde{W}) = \Z^N$ and $\tilde{W}$ is an `irreducible' random walk. To prove complete irreducibility, we must prove reducibility for a random walk with increments on $\tilde{\bDelta} - \tilde{\bd}$ where $\tilde{\bd}$ is an arbitrary vector of $\tilde{\bDelta}$. This reduces to showing that $\sspan_{\Z}(\tilde{\bDelta} - \tilde{\bd}) = \Z^N$. Clearly, $\sspan_{\Z}(\tilde{\bDelta} - \tilde{\bd}) \subset \sspan_{\Z}(\tilde{\bDelta}) = \Z^N$. Let $\bk \in \sspan_{\Z}(\tilde{\bDelta})$, then we can write for some $m_1,\dots,m_L \in \Z$,
\[
    \bk = \sum_{l=1}^L m_l \tilde{\bd_l} = \sum_{l=1}^L m_l (\tilde{\bd}_l - \tilde{\bd}) - m_l (\bs{0} - \tilde{\bd})  
\]
which is a linear combination of vectors in $(\tilde{\bDelta} - \tilde{\bd})$ because $\bs{0} \in \bDelta$. Then $\bk \in \sspan_{\Z}(\tilde{\bDelta} - \tilde{\bd})$. Since $\tilde{W}$ is completely irreducible, the Local Limit Theorem 7.2 in \cite{GikhmanSkorohod2004} and Theorem 2.3.5 of \cite{Lawler2006} apply. Specifically, there exists a constant $0<c<\infty$ such that
\[
     \left| \P(\tilde{W_t} = 0) -\frac{1}{(2 \pi t)^{N/2} \sqrt{\det(\Var(\tilde{D}))}}\right| \leq \frac{c}{t^{\frac{N}{2}+1}}.
\]
Note that $\P(\tilde{W_t} = 0) = \P(W_t = 0)$. Also, from \cref{eq_momsK} and the definition of $\bs{\Psi}$ follows that
\begin{align*}
    \det(\Var(\tilde{D})) = \det( \bs{\Psi}^\dag \Var(D) \bs{\Psi}) = \frac{2^N \det(\Var(K))}{V_U^2}, 
\end{align*}
which completes the proof.
\end{proof}

The following asymptotic result for the expressiveness follows from combining \cref{thm_CLTBound} with the asymptotic behavior of $\mathcal{F}_{\text{Haar}}(t)$ in \cref{def_Exp}.
\begin{corollary}\label{cor_exp}
    Define the approximate expressiveness by
    \begin{equation}
        \tilde{\mathcal{E}}_U(t) = \frac{\sqrt{\det(\Var(K))}}{V_U} (4 \pi t)^{N/2} \mathcal{F}_{\text{Haar}}(t).
    \end{equation}
    Then, under the assumptions of Theorem \ref{thm_CLTBound} there exists a positive constant $c$ such that $|\tilde{\mathcal{E}}_U(t) - \mathcal{E}_U(t)| < c \,t^{\frac{N}{2} - 2^n}$ for all $t>0$.
\end{corollary}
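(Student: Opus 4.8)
The plan is to write both expressiveness quantities as $\mathcal{F}_\text{Haar}$ divided by a frame potential and then feed in the error estimate of \cref{thm_CLTBound}. Observe first that, by the very definition of $\tilde{\mathcal{F}}_U$, the approximate expressiveness is exactly $\tilde{\mathcal{E}}_U(t) = \mathcal{F}_\text{Haar}(t)/\tilde{\mathcal{F}}_U(t)$, mirroring $\mathcal{E}_U(t) = \mathcal{F}_\text{Haar}(t)/\mathcal{F}_U(t)$ from \cref{def_Exp}. Subtracting and placing over a common denominator gives
\[
    \tilde{\mathcal{E}}_U(t) - \mathcal{E}_U(t) = \mathcal{F}_\text{Haar}(t)\, \frac{\mathcal{F}_U(t) - \tilde{\mathcal{F}}_U(t)}{\mathcal{F}_U(t)\,\tilde{\mathcal{F}}_U(t)},
\]
so the task reduces to bounding the numerator from above and the denominator from below, tracking powers of $t$.

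The numerator is controlled directly. \cref{thm_CLTBound} yields $|\mathcal{F}_U(t) - \tilde{\mathcal{F}}_U(t)| \le c_1\, t^{-(N/2+1)}$, while the asymptotics of the binomial coefficient pin down the decay of $\mathcal{F}_\text{Haar}$. Writing $d = 2^n$, one has $\binom{d+t-1}{d-1} = \frac{1}{(d-1)!}\prod_{i=1}^{d-1}(t+i)$, a degree-$(d-1)$ polynomial in $t$ with leading coefficient $1/(d-1)!$; hence $\mathcal{F}_\text{Haar}(t) \le (d-1)!\, t^{-(d-1)}$ for every $t \ge 1$ and $\mathcal{F}_\text{Haar}(t) \sim (d-1)!\, t^{-(2^n-1)}$. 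For the denominator I would use that $\tilde{\mathcal{F}}_U(t) = A\,t^{-N/2}$ exactly, with $A = V_U/((4\pi)^{N/2}\sqrt{\det(\Var(K))}) > 0$, together with the fact that \cref{thm_CLTBound} forces $\mathcal{F}_U(t) = \tilde{\mathcal{F}}_U(t)\bigl(1 + O(t^{-1})\bigr)$; thus there is a threshold $T_0$ with $\mathcal{F}_U(t) \ge \tfrac12\tilde{\mathcal{F}}_U(t)$ for $t \ge T_0$, so $\mathcal{F}_U(t)\tilde{\mathcal{F}}_U(t) \ge \tfrac{A^2}{2}\, t^{-N}$.

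Combining the three estimates, for $t \ge T_0$,
\[
    |\tilde{\mathcal{E}}_U(t) - \mathcal{E}_U(t)| \le (d-1)!\,t^{-(2^n-1)}\cdot \frac{c_1\, t^{-(N/2+1)}}{\tfrac{A^2}{2}\,t^{-N}} = c_2\, t^{N/2 - 2^n},
\]
where the exponent is exactly $-(2^n-1) - (N/2+1) + N = N/2 - 2^n$, as claimed. It then remains to handle the finitely many integer orders $1 \le t < T_0$: there all three quantities are strictly positive and finite (in particular $\mathcal{F}_U(t) = \P(W_t = 0) > 0$, since $\bs{0} \in \bDelta$ occurs with positive probability), so each product $|\tilde{\mathcal{E}}_U(t) - \mathcal{E}_U(t)|\, t^{2^n - N/2}$ is finite and can be absorbed into an enlarged constant $c$.

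The main obstacle is not any individual estimate but securing the lower bound on $\mathcal{F}_U(t)$ uniformly in $t$: one must ensure the true frame potential does not decay strictly faster than its Gaussian surrogate $\tilde{\mathcal{F}}_U(t)$, for otherwise the ratio $\mathcal{F}_\text{Haar}/(\mathcal{F}_U\tilde{\mathcal{F}}_U)$ would be uncontrolled. This is precisely what the two-sided bound of \cref{thm_CLTBound} delivers in the large-$t$ regime, with the small-$t$ regime dispatched by the positivity argument above.
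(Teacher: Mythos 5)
Your proof is correct and takes essentially the paper's approach: the paper dispenses with this corollary in one line, stating that it ``follows from combining \cref{thm_CLTBound} with the asymptotic behavior of $\mathcal{F}_{\text{Haar}}(t)$'', and your argument is precisely that combination with the details supplied. In particular, your identity $\tilde{\mathcal{E}}_U = \mathcal{F}_{\text{Haar}}/\tilde{\mathcal{F}}_U$, the polynomial bound $\mathcal{F}_{\text{Haar}}(t) \leq (2^n-1)!\, t^{-(2^n-1)}$, the lower bound $\mathcal{F}_U(t) \geq \tfrac{1}{2}\tilde{\mathcal{F}}_U(t)$ for $t \geq T_0$ (which is the one genuinely needed step the paper leaves implicit), and the positivity $\mathcal{F}_U(t) = \P(W_t = 0) > 0$ for the finitely many small $t$ are all sound and yield exactly the claimed exponent $N/2 - 2^n$.
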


%%%%%%%%%%%%%%%%%%%%%%%%%%%%%%%
\section{Commutative Circuits with Pauli Rotations }
\label{sec_CommC}

In order to obtain more detailed insights into the relationship between operators, frame potentials, and expressiveness, we consider a specific type of Hamiltonian. We fix the initial state $\ket{\psi_0} = |+ \rangle^{\otimes n}$ and consider an operator $U(\btheta)$ composed exclusively of Pauli-$Z$ rotations on arbitrary sets of qubits. Namely, as in \cref{def_Ucomm} with the $j$-th Hamiltonian $H_j$ acting on a non-empty set of qubits $Q_j \subseteq \{1,\dots,n\}$ and given by 
\begin{equation}\label{def_HPauli}
    H_{j} = \bigotimes_{m=1}^n O^{(j)}_{m},
\end{equation}
where
\begin{equation}\label{def_Ojm}
O^{(j)}_{m} = \begin{cases}
    Z, & m \in Q_{j} \\
    I_2, & m \notin Q_{j}
\end{cases}\end{equation}
and
\begin{equation}
Z = \begin{pmatrix} 1 & \hphantom{-}0\\ 0 & -1 \end{pmatrix}, \quad
I_2 = \begin{pmatrix} 1 & 0\\ 0 & 1 \end{pmatrix}.    
\end{equation}
See \cref{fig:ccircuit} for an example. We further assume that all Hamiltonians are distinct, and that
$n \leq N \leq 2^n-1$, with the case $N=2^n-1$ corresponding to a circuit with all possible rotations. In this setup, the amplitude of the base state is $\langle \psi_0|x\rangle = 2^{-n/2}$, and the eigenvalues of the Hamiltonians satisfy $k_x^{(j)} \in \{-1,1\}$ for all $x=1,\dots,2^n$, $j=1,\dots,N$. 

Further, the resulting operator $U(\btheta)$ can be encoded in a binary matrix $\bA \in \{0,1\}^{n \times N}$ having entries $a_m^{(j)} = 1$ whenever $m$-th qubit is included in the $j$-th rotation, namely
\begin{equation}\label{def_A}
    a_m^{(j)} := \begin{cases}
    1, & m \in Q_j \\
    0, & m \notin Q_j
\end{cases}, \quad m=1,\dots,n, \; j =1,\dots,N.
\end{equation}

\Figure[t!](topskip=0pt, botskip=0pt, midskip=0pt){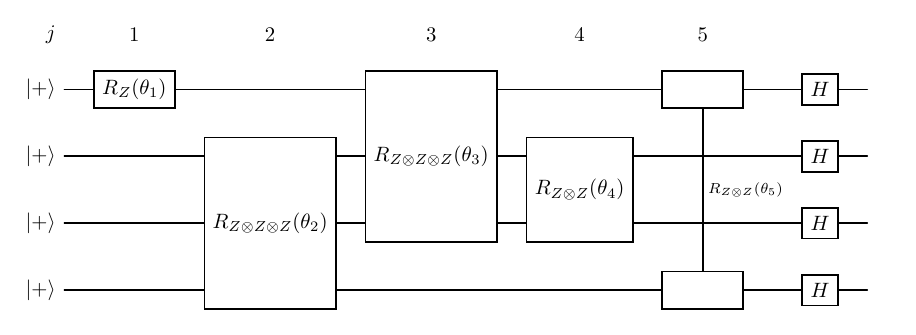}
{An example of a commutative circuit with $n=4$ and $N=5$.\label{fig:ccircuit}}

\subsection{Probabilistic Representation}\label{sec:ProbRepCC}

We now apply the probabilistic representation developed in \cref{sec_General} to circuits with Hamiltonians as in \cref{def_HPauli}, and characterize some of their properties in terms of the matrix $\bA$. Our first result relates algebraically the matrices $\bK$~and~$\bA$.

\begin{proposition}\label{prop_KUA}
Suppose $\{H_j\}_{j=1}^N$ are given by \cref{def_HPauli} and let $\bK$ be as in \cref{def:bK}. Then each column of $\bK$ is a column of the Hadamard matrix $\bs{H}_{2^n}$ obtained by the  Sylvester construction. Moreover, $\bK$ has rank $N$ and it is given by
\begin{equation}\label{eq_2KUA}
    \bK=\bs{1}-2 \left(\bU\bA\right)_{\modtwo} \in \{-1,1\}^{2^n \times N}, 
\end{equation}
where $\bU$ is a $2^n\times n$ binary matrix whose rows are all possible $n$-tuples of $\{0,1\}$.
\end{proposition}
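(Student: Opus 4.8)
The plan is to compute the $x$-th diagonal entry of each $H_j$ explicitly and to recognize the resulting expression simultaneously as an entry of a Sylvester--Hadamard column and as the right-hand side of \cref{eq_2KUA}. First I would use the fact that the diagonal of a Kronecker product is the Kronecker product of the diagonals: since $\diag(Z)=(1,-1)$ and $\diag(I_2)=(1,1)$, the $x$-th diagonal entry of $H_j=\bigotimes_{m=1}^n O^{(j)}_m$ factors across qubits. Writing the index $x$ through its binary tuple $(b_1,\dots,b_n)$ (precisely the $x$-th row of $\bU$), the $m$-th factor contributes $(-1)^{b_m}$ when $m\in Q_j$ and $1$ otherwise, so that
\[
    k_x^{(j)} = \prod_{m\in Q_j}(-1)^{b_m} = (-1)^{\sum_{m=1}^n a_m^{(j)} b_m},
\]
where $a_m^{(j)}$ is the entry of $\bA$ from \cref{def_A}.

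Both conclusions of the proposition can then be read off this single formula. For the Hadamard claim, the exponent is the bitwise inner product of the binary tuple of $x$ with the $j$-th column of $\bA$; since the Sylvester construction gives $(\bs{H}_{2^n})_{x,y}=(-1)^{\langle b_x,b_y\rangle}$, the $j$-th column of $\bK$ coincides with the Hadamard column indexed by that column of $\bA$. For the matrix identity, I would observe that $\sum_m a_m^{(j)} b_m = (\bU\bA)_{x,j}$ and use the elementary entrywise relation $(-1)^p = 1-2(p\bmod 2)$ for integers $p$, which turns the sign into $1-2(\bU\bA)_{\modtwo}$ and yields $\bK=\bs{1}-2(\bU\bA)_{\modtwo}$.

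Finally, for the rank statement I would appeal to orthogonality of Hadamard columns. Because the Hamiltonians are assumed distinct, the subsets $Q_j$---equivalently the columns of $\bA$---are pairwise distinct, so the $N$ columns of $\bK$ are $N$ distinct columns of $\bs{H}_{2^n}$. Distinct columns of the Sylvester--Hadamard matrix are mutually orthogonal (as $\bs{H}_{2^n}\bs{H}_{2^n}^\top = 2^n I$) and hence linearly independent, so $\bK$ has rank $N$.

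I do not expect a genuine obstacle here; the computations are routine once the indexing is fixed. The only point requiring care is the bookkeeping that aligns three descriptions of the same object---the binary expansion of the row index $x$, the corresponding row of $\bU$, and the Hadamard row index---so that the diagonal entry, the Hadamard entry, and the $(\bU\bA)_{\modtwo}$ entry all refer to a consistent labeling of $\{1,\dots,2^n\}$ by binary tuples.
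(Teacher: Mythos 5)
Your proof is correct and follows essentially the same route as the paper's: both reduce the computation to the Kronecker structure of the diagonals, identify $\bk^{(j)} = (-1)^{(\bU \ba^{(j)})_{\modtwo}}$ (you do this entrywise via the Sylvester formula $(\bs{H}_{2^n})_{x,y}=(-1)^{\langle b_x,b_y\rangle}$, the paper via tensor products of $(-1)$-powers), convert signs through $(-1)^p = 1-2(p\,\modtwo)$, and obtain the rank from orthogonality of Hadamard columns. A minor point in your favor: you make explicit that distinctness of the $Q_j$ is what guarantees the $N$ columns of $\bK$ are \emph{distinct} Hadamard columns, a step the paper leaves implicit.
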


\begin{proof}
Each column $\bk^{(j)}$ of $\bK$ can be explicitly written as a Kronecker product of vectors corresponding to the diagonals $o_m^{(j)}$ of $O_m^{(j)}$ in \cref{def_Ojm} as
\begin{equation}\label{eq_Kjxi}
    \bk^{(j)} =  \bigotimes_{m=1}^n o^{(j)}_{m},
\end{equation}
where for each $m=1,\dots,n$,
\begin{equation}\label{def_xijm}
o^{(j)}_{m}=\begin{cases}
\diag{(Z)}, & m \in Q_{j} \\
\diag{(I_2)}, & m \notin Q_{j}
\end{cases}.
\end{equation}
It follows from \cref{def_xijm} that each column of the matrix $\bK$ is a column from the Hadamard matrix $\bs{H}_{2^n}$ of order $n$ obtained by Sylvester's construction. See \cite{Horadam2012} for reference. In particular, the $N$ columns of $\bK$ form an orthogonal set of $\Z^{2^n}$ and are therefore linearly independent. To establish \cref{eq_2KUA}, refer to the construction in \cref{def_Ojm} and observe that the correspondences
\[
\diag{(Z)}=
\begin{pmatrix}\hp\\-1\end{pmatrix}
=
\begin{pmatrix}(-1)^0\\(-1)^1\end{pmatrix} 
\]
and
\[
\diag{(I_2)}=
\begin{pmatrix}1\\1\end{pmatrix}
=
\begin{pmatrix}(-1)^0\\(-1)^0\end{pmatrix}
\]
maps the vectors $o_m^{(j)}$ in \cref{def_xijm} to
\[
(-1)^{\tiny \begin{pmatrix} 0 \\ a_m^{(j)}\end{pmatrix}}, \quad m=1,\dots,n, \quad j=1,\dots,N,
\]
with $a_m^{(j)}$ denoting the entries of $\bA$ as in \cref{def_A}. Since products of powers of $(-1)$ correspond to addition modulus 2 of the exponents, expanding the tensor product in \cref{eq_Kjxi} gives $(-1)$ raised to a linear combination of the columns of $\bU$ with coefficients $a_m^{(j)}$. In other words, for the $j$-th column,
\begin{equation*}
    \bk^{(j)} = -1^{\left(\bU \ba^{(j)}\right)_{\modtwo}} = \bs{1}-2 (\bU \ba^{(j)})_{\modtwo}, \quad j=1,\dots,N,
\end{equation*}
and this completes the proof.
\end{proof}

The random index $X$ in \cref{def_kx} can take $2^n$ values, each with probability $\langle \psi_0 | x\rangle^2 = 1/2^n$. In order to describe the random variable $K$ and its state space $\bO$ in \cref{thm_main}, we need to characterize the possible repetitions of each $\bk$ in $\bK$. 

\begin{theorem}\label{thm_distK}
Suppose $\{H_j\}_{j=1}^N$ are given by \cref{def_HPauli}. The random vector $K$ is uniformly distributed on 
\begin{equation}\label{eq_Omrow}
    \bO  = \bs{1}-2\mathrm{row}_{\Z_2^N}(\bA),
\end{equation}
where $\mathrm{row}_{\Z_2^N}(\bA)$ denotes the row space of $\bA$ in $\Z_2^N$. The distribution of $K$ is
\begin{equation}\label{eq_distKj}
    \P\left(K^{(j)} = 1\right) = \P\left(K^{(j)} = -1\right) = \frac{1}{2},
\end{equation}
for $j=1,\dots,N$, and $K$ has expectation and covariance matrices respectively given by
\begin{equation}\label{eq_momsK}
\EXP K = 0, \quad \Var K = I_N.
\end{equation}
\end{theorem}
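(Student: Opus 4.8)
The plan is to leverage the explicit form of $\bK$ from \cref{prop_KUA} together with the fact that, because $\ket{\psi_0} = \ket{+}^{\otimes n}$, the index $X$ is uniform on $\{1,\dots,2^n\}$ and hence $K = \bk_X$ is a uniformly chosen row of $\bK$. Writing $\bs{u}_x \in \{0,1\}^n = \Z_2^n$ for the $x$-th row of $\bU$, \cref{eq_2KUA} gives $\bk_x = \bs{1} - 2(\bs{u}_x \bA)_{\modtwo}$. Thus $K = \bs{1} - 2V$ with $V = (\bs{u}_X \bA)_{\modtwo}$, where $\bs{u}_X$ is a uniformly random element of $\Z_2^n$ (the rows of $\bU$ enumerate $\Z_2^n$ bijectively). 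Identifying the law of $K$ therefore reduces to identifying the law of $V$.

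First I would study the $\Z_2$-linear map $\phi\colon \Z_2^n \to \Z_2^N$, $\phi(\bs{u}) = (\bs{u}\bA)_{\modtwo}$. Its image is by definition $\mathrm{row}_{\Z_2^N}(\bA)$, so $V$ takes values in $\bs{1} - 2\,\mathrm{row}_{\Z_2^N}(\bA) = \bO$, matching \cref{eq_Omrow}. To obtain uniformity, I would note that $\phi$ is a group homomorphism, so each nonempty fiber $\phi^{-1}(\bs{v})$ is a coset of $\ker\phi$ and therefore has the same cardinality $2^{n-\mathrm{rank}_{\Z_2}(\bA)}$. Pushing the uniform law on $\Z_2^n$ forward through $\phi$ thus produces the uniform law on $\image\phi$, establishing that $K$ is uniform on $\bO$.

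For the marginals \cref{eq_distKj}, I would use that $K^{(j)} = (-1)^{(\bs{u}\cdot\ba^{(j)})_{\modtwo}}$ as a function of $\bs{u}$. Since each rotation acts on a nonempty qubit set $Q_j$, we have $\ba^{(j)} \neq 0$ in $\Z_2^n$, so $\bs{u} \mapsto (\bs{u}\cdot\ba^{(j)})_{\modtwo}$ is a nonzero linear functional whose kernel is a hyperplane of size $2^{n-1}$; exactly half of the $2^n$ equally likely rows give $K^{(j)}=1$ and half give $K^{(j)}=-1$. This is the statement that each nontrivial Sylvester--Hadamard column found in \cref{prop_KUA} is balanced, and it immediately yields $\EXP K^{(j)} = 0$.

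Finally, for the covariance in \cref{eq_momsK} I would invoke the Hadamard orthogonality already recorded in \cref{prop_KUA}: the columns of $\bK$ are mutually orthogonal Hadamard columns of length $2^n$, so $\sum_{x=1}^{2^n} k_x^{(i)} k_x^{(j)} = 2^n\delta_{ij}$, whence $\Var(K)_{i,j} = \EXP[K^{(i)}K^{(j)}] = 2^{-n}\sum_x k_x^{(i)}k_x^{(j)} = \delta_{ij}$ and $\Var K = I_N$. Equivalently, $K^{(i)}K^{(j)} = (-1)^{(\bs{u}\cdot(\ba^{(i)}+\ba^{(j)}))_{\modtwo}}$, and distinctness of the Hamiltonians forces $\ba^{(i)}+\ba^{(j)} \neq 0$ for $i\neq j$, so the balancing argument again gives expectation $0$ off the diagonal. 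The main conceptual obstacle is the uniformity claim: the nontrivial point is not that $V$ lands in the row space but that its pushforward is \emph{uniform}, which rests on the equal-fiber (coset) structure of $\phi$ rather than on a direct enumeration of how often each wave vector repeats in $\bK$.
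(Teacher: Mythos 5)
Your proposal is correct and follows essentially the same route as the paper: both derive $\bO$ and the uniformity of $K$ from \cref{eq_2KUA} by counting the equal-sized fibers of $\bs{u} \mapsto (\bs{u}\bA)_{\modtwo}$ (your coset-of-$\ker\phi$ phrasing is the paper's ``two raised to the nullity of $\bA$'' count), and both obtain \cref{eq_distKj} and \cref{eq_momsK} from the balance and orthogonality of the Sylvester--Hadamard columns of $\bK$. Your explicit use of $\ba^{(j)} \neq 0$ (nonempty $Q_j$) to rule out the all-ones column is a small gain in rigor over the paper's appeal to the Hadamard column property, but not a different argument.
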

\begin{proof}
By \cref{eq_2KUA} in Lemma \ref{prop_KUA}, each row $\bk_x$ of $\bK$ corresponds to the linear combination $(\bU_x \bA)_{\modtwo}$ of the rows of $\bA$ as elements of $\Z_2^N$. Since $\bU$ exhausts all binary $n$-tuples, it follows that $(\bU \bA)_{\modtwo}$ contains all vectors in $\mathrm{row}_{\Z_2^N}(\bA)$, from which \cref{eq_Omrow} follows. The cardinality of $\bO$ is $\#(\bO)  = 2^{r}$ with $r:= \mathrm{rank}_{\Z_2^N}(\bA)$ and $r\leq n \leq N$. Each row $\bk_x$ of $\bK$ corresponds to the solutions $\bU_x$ of
\[
(\bU_x \bA)_{\modtwo} = \frac{1}{2}(\bs{1}-\bk_x).
\]
The number of such solutions equals two raised to the nullity of $\bA$, namely $2^{n-r}$. Hence each element of $\bO$ is repeated $2^{n-r}$ times as a row in $\bK$. It follows that for all $\bk$,
\begin{equation*}
    \P(K=\bk) = \frac{\#\{x:\bk_x=\bk\}}{2^n}= \frac{2^{n-r}}{2^{n}} = 2^{-r}
\end{equation*}
and $K$ is therefore uniformly distributed on $\bO$. Further, note that by \cref{prop_KUA}, each column of $\bK$ is a column of the Hadamard matrix $\bs{H}_{2^n}$, and therefore contains exactly $2^{n-1}$ entries equal to $1$ and $-1$ respectively from which it follows that \cref{eq_distKj} holds and $\EXP(K) = 0$. Also,
\begin{equation*}
    \Var(K^{(j)}) = \sum_{x=1}^{2^n} (k^{(j)}_x)^2 \langle +|x\rangle^2 = 1.
\end{equation*}
Moreover, the columns of $\bs{H}_{2^n}$ form an orthogonal set of vectors in $\{-1,1\}^{2^n}$, and hence so do the columns of $\bK$. This proves that $\Var(K)_{j_1,j_2} = 0$ for all $j_1 \neq j_2$.
\end{proof}

\begin{example}
\label{ex_omega}
For the circuit in Fig. \ref{fig:ccircuit}, we have $n=4,N=5$,
\begin{small}
\[
\bA\!=\!\begin{pmatrix}
1 & 0 & 1 & 0 & 1 \\
0 & 1 & 1 & 1 & 0 \\
0 & 1 & 1 & 1 & 0 \\
0 & 1 & 0 & 0 & 1
\end{pmatrix}\!\!,\,
\bK\!=\!
\begin{pmatrix}
\hp & \hp & \hp & \hp & \hp \\
\hp & -1 & \hp & \hp & -1 \\
\hp & -1 & -1 & -1 & \hp \\
\hp & \hp & -1 & -1 & -1 \\
\hp & -1 & -1 & -1 & \hp \\
\hp & \hp & -1 & -1 & -1 \\
\hp & \hp & \hp & \hp & \hp \\
\hp & -1 & \hp & \hp & -1 \\
-1 & \hp & -1 & \hp & -1 \\
-1 & -1 & -1 & \hp & \hp \\
-1 & -1 & \hp & -1 & -1 \\
-1 & \hp & \hp & -1 & \hp \\
-1 & -1 & \hp & -1 & -1 \\
-1 & \hp & \hp & -1 & \hp \\
-1 & \hp & -1 & \hp & -1 \\
-1 & -1 & -1 & \hp & \hp
\end{pmatrix}. %EW: FIXED IT :)
\]
\end{small}
Note that $\bK$ has $2^n = 16$ rows and its columns are orthogonal. The rank of $\bA$ is $r=\mathrm{rank}_{\Z_2^n}(\bA)=3$ and $\bK$ has $2^r=8$ distinct rows and each repeated $2^{n-r}=2$ times.
\end{example}

Next, we turn to the difference $D$ and the random walk $W$ in \cref{def_WD}. First note that $D=K-K' \in \{-2,0,2\}^N$, consequently $\bDelta \in \{-2,0,2\}^{L \times N}$. The lattice $\lat(W)$ in \cref{eq_latWA} is thus a subset of $2\Z^N$. The following Proposition characterizes this lattice in terms of $\bA$.

\begin{proposition}\label{prop_relate2A}
Suppose $\{H_j\}_{j=1}^N$ are given by \cref{def_HPauli}.  The lattice of $W$ can be written 
\begin{equation}\label{eq_latWA}
    \lat(W) = \sspan_{\Z}(\bO - \bs{1}) = \sspan_{\Z}(2\mathrm{row}_{\Z_2^N}(\bA)),
\end{equation}
where $\bO - \bs{1}$ denotes the set (or matrix) obtained from subtracting 1 from every entry of the vectors in $\bO$.
\end{proposition}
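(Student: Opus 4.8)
The plan is to establish the two equalities in \cref{eq_latWA} in sequence. First I would show $\lat(W) = \sspan_{\Z}(\bO - \bs{1})$, and then show $\sspan_{\Z}(\bO - \bs{1}) = \sspan_{\Z}(2\,\mathrm{row}_{\Z_2^N}(\bA))$, the latter being essentially immediate from \cref{thm_distK}.

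For the first equality, recall from \cref{eq_latWspan} that $\lat(W) = \sspan_{\Z}(\bDelta)$, where $\bDelta = \{\bk - \bk' : \bk, \bk' \in \bO\}$. The key observation is that every generator $\bk - \bk'$ of $\bDelta$ can be rewritten as $(\bk - \bs{1}) - (\bk' - \bs{1})$, so it lies in $\sspan_{\Z}(\bO - \bs{1})$; this gives the inclusion $\lat(W) \subseteq \sspan_{\Z}(\bO - \bs{1})$. For the reverse inclusion, I would use the fact that $\bs{0} \in \bO - \bs{1}$, which holds because $\bs{1} \in \bO$ (the all-ones wave vector arises from the zero row of $\bU$, i.e.\ $x$ such that $\bk_x = \bs{1}$, as seen in \cref{eq_2KUA}). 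Given any generator $\bk - \bs{1}$ of $\sspan_{\Z}(\bO - \bs{1})$, I would write it as $\bk - \bs{1} = (\bk - \bs{1}) - (\bs{1} - \bs{1})$, exhibiting it as a difference of two elements of $\bO$ (namely $\bk$ and $\bs{1}$), hence an element of $\bDelta \subseteq \lat(W)$. This establishes $\sspan_{\Z}(\bO - \bs{1}) \subseteq \lat(W)$ and completes the first equality.

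For the second equality, I would invoke \cref{eq_Omrow} from \cref{thm_distK}, which states $\bO = \bs{1} - 2\,\mathrm{row}_{\Z_2^N}(\bA)$. Subtracting $\bs{1}$ from every element gives $\bO - \bs{1} = -2\,\mathrm{row}_{\Z_2^N}(\bA)$, and since integer span is invariant under negation of generators, $\sspan_{\Z}(\bO - \bs{1}) = \sspan_{\Z}(2\,\mathrm{row}_{\Z_2^N}(\bA))$.

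I expect the main subtlety to be the bookkeeping in the first equality: one must be careful that $\bO - \bs{1}$ generates \emph{the same} lattice as $\bDelta$, not a strictly larger one. The crucial ingredient making the reverse inclusion work is that $\bs{1} \in \bO$, so that shifts by $\bs{1}$ stay within the set of admissible differences; this is precisely what prevents the translated generating set from spanning something larger than $\lat(W)$. I would make sure to state explicitly why $\bs{1} \in \bO$, referring back to \cref{prop_KUA} (the zero element of $\mathrm{row}_{\Z_2^N}(\bA)$ corresponds via \cref{eq_Omrow} to the wave vector $\bs{1}$). The remaining steps are routine manipulations of integer spans.
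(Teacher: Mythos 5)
Your proposal is correct and follows essentially the same route as the paper's proof: both equalities rest on the same two ingredients, namely that $\bs{1} \in \bO$ (so $\bk - \bs{1}$ is itself an element of $\bDelta$) and the rewriting $\bk - \bk' = (\bk - \bs{1}) - (\bk' - \bs{1})$, with the second equality read off directly from \cref{eq_Omrow}. The only cosmetic difference is that you present the two inclusions in the opposite order and justify $\bs{1} \in \bO$ via the zero element of $\mathrm{row}_{\Z_2^N}(\bA)$ rather than via the all-ones first row of $\bs{H}_{2^n}$, which are equivalent observations.
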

\begin{proof}
The rightmost equation in \cref{eq_latWA} follows directly from \cref{eq_Omrow} in \cref{thm_distK}. For the equation on the left, note that the constant vector $(1,\dots,1) \in \Z^N$ is always an element of $\bO$ because the first row of the Hadamard matrix $\bs{H}_{2^n}$ is composed of only ones. Since $\bDelta$ are differences between elements of $\bO$, it follows that $\sspan_{\Z}(\bO-\bs{1}) \subseteq \sspan_{\Z}(\bDelta) = \lat(W)$. Any element in $\lat(W)$ can be written as
\begin{equation*}
    \sum_{l=1}^L m_l (\bk_l - \bk'_l) = \sum_{l=1}^L m_l (\bk_l - \bs{1}) - m_l (\bk'_l - \bs{1}),
\end{equation*}
which is a linear combination of elements of $\bO - \bs{1}$, from which \cref{eq_latWA} follows.  
\end{proof}

\subsection{Lattice Volume and the Frame Potential}
\label{sec:characterization}

A direct application of \cref{thm_CLTBound} and \cref{cor_exp} gives the following expressions that approximate the frame potential and expressiveness of $U$ in terms of its corresponding lattice volume.

\begin{proposition}\label{prop:FEapproxComm}
    Suppose $\{H_j\}_{j=1}^N$ are given by \cref{def_HPauli}, then the approximate frame potential and expressiveness of $U$ are, respectively, 
\begin{equation}\label{eq:FEtildeComm}
    \tilde{\mathcal{F}}_U(t) = \frac{V_U}{(4\pi t )^{N/2}}, \quad 
    \tilde{\mathcal{E}}_U(t) = 
        \frac{(4 \pi t)^{N/2} \mathcal{F}_{\text{Haar}}(t)}{V_U}. 
\end{equation}
\end{proposition}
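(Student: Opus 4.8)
The plan is to recognize that \cref{prop:FEapproxComm} is a direct specialization of the general asymptotic formulas established in \cref{thm_CLTBound} and \cref{cor_exp}. Those formulas depend on the circuit only through two quantities: the lattice volume $V_U$ and the determinant $\det(\Var(K))$ of the covariance matrix of $K$. Since $V_U$ already appears in the claimed expressions, the entire task reduces to evaluating $\det(\Var(K))$ for the Pauli-$Z$ rotation Hamiltonians of \cref{def_HPauli}, whose distribution of $K$ has been fully characterized in \cref{thm_distK}.

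Concretely, I would invoke \cref{thm_distK}, which asserts that under \cref{def_HPauli} the random vector $K$ is uniformly distributed on $\bO$ and has covariance matrix equal to the identity, $\Var(K) = I_N$. Taking determinants gives $\det(\Var(K)) = \det(I_N) = 1$, so that $\sqrt{\det(\Var(K))} = 1$. Substituting this value into the definition of $\tilde{\mathcal{F}}_U(t)$ from \cref{thm_CLTBound} collapses the denominator to $(4\pi t)^{N/2}$, yielding the first formula in \cref{eq:FEtildeComm}; substituting it into the definition of $\tilde{\mathcal{E}}_U(t)$ from \cref{cor_exp} removes the factor $\sqrt{\det(\Var(K))}$ from the numerator and yields the second formula. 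This is the whole argument.

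There is essentially no obstacle here, since the substantive computation—deriving $\Var(K)=I_N$ from the orthogonality of the columns of the Hadamard matrix $\bs{H}_{2^n}$—was already carried out in \cref{thm_distK}. The only point needing verification is that the hypotheses of \cref{thm_CLTBound} hold in this setting, i.e. that $\{H_j\}_{j=1}^N$ is linearly independent so that $\bK$ has rank $N$; this follows immediately from \cref{prop_KUA}, where the columns of $\bK$ are shown to form an orthogonal subset of $\Z^{2^n}$. Consequently the error bounds of \cref{thm_CLTBound} and \cref{cor_exp} are inherited without change, and only the leading-order expressions simplify as stated.
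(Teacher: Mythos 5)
Your proposal is correct and matches the paper's own (implicit) argument: the paper likewise treats \cref{prop:FEapproxComm} as a direct application of \cref{thm_CLTBound} and \cref{cor_exp}, with the only substantive input being $\Var(K)=I_N$ from \cref{thm_distK}, so that $\sqrt{\det(\Var(K))}=1$ and the formulas simplify as stated. Your additional check that the rank-$N$ hypothesis of \cref{thm_CLTBound} holds via the orthogonality of the columns of $\bK$ in \cref{prop_KUA} is a point the paper leaves implicit, and is a welcome bit of extra care rather than a deviation.
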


\begin{example}
    For the circuit shown in \cref{fig:ccircuit} with $n=4$, $N=5$, the value of the lattice volume is $V_U = 128$. \cref{fig:FEexample} compares the exact frame potential and expressiveness with their corresponding approximate formulas in \cref{eq:FEtildeComm}. 
\end{example}

In fact, the lattice volume $V_U$ completely determines the behavior of the frame potential for large values of $t$. We aim at characterizing $V_U$ as a function of the circuit's architecture. The ultimate goal is to understand which operators give rise to small frame potentials (as a function of $t$) and therefore to large expressiveness for a given amount of resources. First we prove a lemma with two distinct representations of the lattice volume. 

 \begin{lemma}\label{lem_minor}
    Let $U$ be an operator on $n$ qubits with $\{H_j\}_{j=1}^N$ given by \cref{def_HPauli} with $n \leq N \leq 2^n-1$. The lattice volume is given by
    \begin{equation}\label{eq_vol2NUA}
    V_U = 2^N |\det((\bU \bA)_{\modtwo}^*)|,
    \end{equation}
    where $(\bU \bA)_{\modtwo}^*$ is an $N\times N$ matrix whose rows are selected from $(\bU \bA)_{\modtwo}$ and span $\lat(W)$. Also, there exists an $N\times N$ minor, denoted $\mathrm{Min}^*(\bs{H}_{2^n})$, of the Hadamard matrix $\bs{H}_{2^n}$ such that
    \begin{equation}\label{eq_thmMinor}
        V_U = |\mathrm{Min}^*(\bs{H}_{2^n})|.
    \end{equation}
\end{lemma}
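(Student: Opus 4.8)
The plan is to derive both identities from the single fact, already recorded in \cref{def_VU} and \cref{prop_relate2A}, that $V_U=|\det(\bDelta^{*})|$ is the covolume of the full-rank lattice $\lat(W)=\sspan_{\Z}\bigl(2\,(\bU\bA)_{\modtwo}\bigr)$, where the rows of $2(\bU\bA)_{\modtwo}$ are the integer lifts of the elements of $2\,\mathrm{row}_{\Z_2^N}(\bA)$. For \cref{eq_vol2NUA} I would first observe that every row $2\bs{v}$ of $2(\bU\bA)_{\modtwo}$ already lies in $\bDelta$: by \cref{eq_Omrow} we have $\bs{k}:=\bs{1}-2\bs{v}\in\bO$, and $\bs{1}\in\bO$ is the all-ones row of $\bK$, so $2\bs{v}=\bs{1}-\bs{k}$ is a difference of two elements of $\bO$. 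Hence a choice of $N$ rows of $(\bU\bA)_{\modtwo}$ whose doubles form a $\Z$-basis of $\lat(W)$ produces, after scaling by $2$, an admissible $\bDelta^{*}=2(\bU\bA)_{\modtwo}^{*}$ in the sense of \cref{def_VU}. Pulling the common factor $2$ out of each of the $N$ rows then gives $V_U=|\det\bDelta^{*}|=2^{N}\,|\det (\bU\bA)_{\modtwo}^{*}|$.

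For \cref{eq_thmMinor} I would pass from the $\{0,1\}$-matrix $(\bU\bA)_{\modtwo}^{*}$ to the companion $\{-1,1\}$-matrix $\bK^{*}:=\bs{1}\bs{1}^{\top}-2(\bU\bA)_{\modtwo}^{*}$, which by \cref{eq_2KUA} is exactly the $N\times N$ submatrix of $\bK$ indexed by the rows selected above; since each column of $\bK$ is a column of $\bs{H}_{2^n}$ by \cref{prop_KUA}, $\bK^{*}$ is a submatrix of $\bs{H}_{2^n}$. Because every $Q_j$ is nonempty, none of the $N$ columns of $\bK$ is the constant column, so the all-ones column of $\bs{H}_{2^n}$ is still available as a bordering column, and the all-ones row $\bs{1}\in\bO$ supplies a bordering row. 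Bordering $\bK^{*}$ with these yields a square submatrix of $\bs{H}_{2^n}$,
\begin{equation*}
    \bs{M}=\begin{pmatrix} 1 & \bs{1}^{\top}\\ \bs{1} & \bK^{*}\end{pmatrix},
\end{equation*}
and subtracting the first column of $\bs{M}$ from every other column turns the lower block into $-2(\bU\bA)_{\modtwo}^{*}$ and the top row into $(1,\bs{0}^{\top})$, so $\det\bs{M}=(-2)^{N}\det (\bU\bA)_{\modtwo}^{*}$. Taking absolute values and using \cref{eq_vol2NUA} gives $V_U=|\det\bs{M}|$, exhibiting $\bs{M}$ as the desired $\mathrm{Min}^{*}(\bs{H}_{2^n})$.

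The main obstacles are two. First, \cref{eq_vol2NUA} needs the generating family $2(\bU\bA)_{\modtwo}$ of $\lat(W)$ to contain a $\Z$-basis \emph{among its rows} (a generating set need not contain a basis in general); I would justify this either from the special $\{0,\pm2\}$ structure of $\bDelta$ or, more robustly, through the Smith normal form, which identifies $V_U$ with the $\gcd$ of the maximal minors of $2(\bU\bA)_{\modtwo}$ and guarantees this $\gcd$ is realized by a single $N\times N$ minor. Second, and more delicate, the bordering intrinsically produces an $(N+1)\times(N+1)$ minor of $\bs{H}_{2^n}$, and this size cannot be reduced: $V_U=2^{N}|\det(\bU\bA)_{\modtwo}^{*}|$ is divisible by $2^{N}$, whereas an $N\times N$ $\pm1$-minor of $\bs{H}_{2^n}$ need not be (already for $n=N=1$ every $1\times1$ minor equals $\pm1$ while $V_U=2$). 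Reconciling this with the stated formula — that is, reading \cref{eq_thmMinor} with $\mathrm{Min}^{*}(\bs{H}_{2^n})$ an $(N+1)\times(N+1)$ minor rather than an $N\times N$ one — is the step I would scrutinize most carefully.
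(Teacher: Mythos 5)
Correct, and essentially the paper's own argument: you obtain \cref{eq_vol2NUA} by factoring $2$ out of each row of a generating selection from $2(\bU\bA)_{\modtwo}$ via \cref{prop_relate2A}, and \cref{eq_thmMinor} by bordering the associated $\pm1$ submatrix of $\bs{H}_{2^n}$ with an all-ones row and column and cofactor-expanding after an elementary reduction (the paper subtracts the ones row from the other rows, you subtract the ones column from the other columns --- the same computation), with your checks that the rows of $2(\bU\bA)_{\modtwo}$ lie in $\bDelta$ and that the constant column of $\bs{H}_{2^n}$ is not among the columns of $\bK$ being details the paper leaves implicit. Your final scruple is vindicated rather than problematic --- the paper's proof likewise produces an $(N+1)\times(N+1)$ minor, consistent with the bound $(N+1)^{\frac{N+1}{2}}$ in \cref{thm:maxminvol}, so the statement's ``$N\times N$'' is a misprint and your $n=N=1$ example ($V_U=2$ while every $1\times 1$ minor is $\pm1$) correctly shows the size cannot be $N$; the only blemish is the Smith-normal-form aside, since the gcd of the maximal minors need not be attained by any single minor, but nothing in your proof rests on that claim because \cref{def_VU} already presupposes the existence of the selection $\bDelta^*$.
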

\begin{proof}
    \cref{eq_vol2NUA} follows from \cref{eq_latWA} in \cref{prop_relate2A} because $\mathrm{row}_{\modtwo}(\bA)$ is the set of rows of $(\bU \bA)_{\modtwo}$. For \cref{eq_thmMinor}, recall that $\bO$ is the matrix obtained by selecting $N$ columns from of $\bs{H}_{2^n}$ and deleting the repeating rows. Consider the matrix $\bO' := (\bO -\bs{1})^* + \bs{1}$, where the operation $*$ selects $N$ rows of $(\bO -\bs{1})$ generating the lattice. Then $\bO'$ is also a sub-matrix of $\bs{H}_{2^n}$. Also, \cref{eq_latWA} implies $V_U = |\det(\bO' - \bs{1})|$. Consider the \mbox{$(N+1)\times (N+1)$} matrix $\bO''$ obtained by appending a row vector of $1$s to the top of $\bO'$ and a column vector of $1$s to the left of $\bO'$. Then $\bO''$ is also a sub-matrix of $\bs{H}_{2^n}$ because the first row and column of $\bs{H}_{2^n}$ have all elements equal to one. If for all $i=2,\dots,N+1$ we replace the $i$-th row of $\bO''$ by itself minus the constant vector of $1$s, the resulting matrix $\bO'''$ has the same determinant as $\bO''$ with entries $\bO'''_{1,1} = 1$, $\bO'''_{j,1} = 0$ and $\bO'''_{i,j} = (\bO'-\bs{1})_{i-1,j-1}$ for $i,j=2,\dots,N+1$. Taking cofactors with respect to the first column in $\bO'''$, we obtain
    \begin{equation*}
        \det(\bO'') = \det(\bO''') = \det(\bO'-\bs{1}) = \det((\bO-\bs{1})^*).
    \end{equation*}
    But $\det(\bO'')$ is a minor of $\bs{H}_{2^n}$ and therefore $V_U = |\det(\bO'')|$ satisfies \cref{eq_thmMinor} as desired.
\end{proof}

The minors of any Hadamard matrix satisfy the famous `Hadamard bound' which states that any $N\times N$ matrix~$\bs{M}$ with entries in the complex unit disk, satisfies $|\det(\bs{M})| \leq N^{N/2}$. Hadamard matrices are precisely those for which the bound is realized, in particular, for Sylvester's construction, $|\det(\bs{H}_{2^n})| = (2^n)^{2^{n-1}}$. See \cite{Kravv2016} and references therein. For example, it follows directly from \cref{eq_vol2NUA} and the Hadamard bound, that $V_U \leq 2^N N^{N/2}$ for all~$N$. The following result refines this bound and provides a more comprehensive picture of the volume. 

\begin{theorem}\label{thm:maxminvol}
Let $U$ be an operator on $n$ qubits with $\{H_j\}_{j=1}^N$ given by \cref{def_HPauli}  with $n \leq N \leq 2^n-1$. Then the lattice volume, $V_U$, satisfies the following bound
\begin{align}\label{eqn_boundvU}
%\begin{split}
    V_{\min} \leq 2^N \leq V_U \leq (N+1)^{\frac{N+1}{2}} \leq   V_{\max},
%\end{split}
\end{align}
where $V_{\min} := 2^n$ and $V_{\max} := (2^n)^{2^{n-1}}$. The maximum possible value of the volume $V_{\max}$ is attained by a circuit with all rotations, $N= 2^n-1$.
The minimum possible value of the volume $V_{\min}$ is attained by a circuit with $N=n$ which occurs if and only if $\bA$ is a square matrix of full rank in $\Z_2^N$. 
\end{theorem}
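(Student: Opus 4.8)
The plan is to establish the four-term chain in \cref{eqn_boundvU} from left to right, using the two determinantal representations of $V_U$ provided by \cref{lem_minor}, and then to treat the two attainment claims separately.

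The leftmost inequality $V_{\min}=2^n\leq 2^N$ is immediate from the standing assumption $n\leq N$. For the bound $2^N\leq V_U$ I would invoke \cref{eq_vol2NUA}, which gives $V_U=2^N|\det((\bU\bA)_{\modtwo}^*)|$ with $(\bU\bA)_{\modtwo}^*$ an integer matrix (its entries lie in $\{0,1\}$). Because the Hamiltonians are linearly independent, $\lat(W)$ has full rank $N$, so $V_U>0$ and hence $|\det((\bU\bA)_{\modtwo}^*)|$ is a \emph{nonzero} integer, that is at least $1$; this yields $V_U\geq 2^N$. For the upper bound $V_U\leq (N+1)^{(N+1)/2}$ I would use the second representation \cref{eq_thmMinor}: the proof of \cref{lem_minor} exhibits $V_U=|\det(\bO'')|$, where $\bO''$ is an $(N+1)\times(N+1)$ submatrix of the Sylvester--Hadamard matrix $\bs{H}_{2^n}$, whose entries are $\pm 1$ and thus lie in the complex unit disk. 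The Hadamard bound applied to $\bO''$ gives exactly $|\det(\bO'')|\leq (N+1)^{(N+1)/2}$. Finally, since $N\leq 2^n-1$ we have $N+1\leq 2^n$, and because $m\mapsto m^{m/2}$ is increasing for $m\geq 1$, this gives $(N+1)^{(N+1)/2}\leq (2^n)^{2^{n-1}}=V_{\max}$, closing the chain.

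For the attainment of $V_{\max}$ I would observe that taking $N=2^n-1$ corresponds to using every nonempty subset $Q_j$, so the columns of $\bK$ are all the nonconstant columns of $\bs{H}_{2^n}$ and the rows of $\bO$ are all $2^n$ of them (distinct, since two rows agreeing on every nonconstant column and on the all-ones column would coincide). The matrix $\bO''$ from \cref{lem_minor} is then the full $2^n\times 2^n$ matrix $\bs{H}_{2^n}$ up to permutation, so $V_U=|\det(\bs{H}_{2^n})|=(2^n)^{2^{n-1}}=V_{\max}$, and a concrete architecture realizing the maximum exists.

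The delicate part, and the main obstacle, is the characterization of the minimum. The forced equalities in the chain show that $V_U=2^n$ can hold only if both $2^N=2^n$ (hence $N=n$) and $|\det((\bU\bA)_{\modtwo}^*)|=1$, i.e.\ the integer span $\sspan_{\Z}(\mathrm{row}_{\Z_2^N}(\bA))$ equals $\Z^N$. The key step is to relate this integer unimodularity to full rank over $\Z_2$ via reduction modulo $2$: since the mod-$2$ map is a ring homomorphism sending the $\{0,1\}$-representatives of $\mathrm{row}_{\Z_2^N}(\bA)$ to $\mathrm{row}_{\Z_2^N}(\bA)$ itself, one has $\sspan_{\Z}(\mathrm{row}_{\Z_2^N}(\bA))\bmod 2=\mathrm{row}_{\Z_2^N}(\bA)$, so the integer span equals $\Z^N$ if and only if $\mathrm{row}_{\Z_2^N}(\bA)=\Z_2^N$, that is $\mathrm{rank}_{\Z_2^N}(\bA)=N$; note this rank condition already forces $N\leq n$ and hence $N=n$, so $\bA$ is precisely a square matrix of full rank in $\Z_2^N$. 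For the converse I would note that full $\Z_2$-rank with $N=n$ gives $\mathrm{row}_{\Z_2^N}(\bA)=\Z_2^n$, so by \cref{eq_latWA} we get $\lat(W)=\sspan_{\Z}(2\Z^n)=2\Z^n$ with volume $2^n$ (e.g.\ $\bA=I_n$ realizes this), confirming both directions and completing the proof.
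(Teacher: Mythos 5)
Your proof is correct, and the chain of inequalities follows the paper's own route essentially verbatim: the same two representations of $V_U$ from \cref{lem_minor} (\cref{eq_vol2NUA} for the lower bound; the minor representation \cref{eq_thmMinor} together with the Hadamard bound on the $(N+1)\times(N+1)$ matrix $\bO''$ for the upper one), followed by the same monotonicity step using $N+1 \leq 2^n$. Your justification that $|\det((\bU\bA)_{\modtwo}^*)|$ is a \emph{nonzero} integer, hence at least $1$ (via full rank of $\lat(W)$), is in fact cleaner than the paper's, which asserts $|\det((\bU\bA)_{\modtwo}^*)| > 1$ --- evidently a slip for $\geq 1$. Where you genuinely diverge is in the two attainment claims, and the divergence is complementary rather than redundant. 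For the maximum, the paper proves \emph{necessity}: equality in the Hadamard bound forces $\bO''$ to be a $2^n\times 2^n$ Hadamard matrix, hence $N = 2^n-1$; you instead prove \emph{sufficiency}, by checking directly that with all rotations $\bO$ consists of all $2^n$ distinct rows of $\bs{H}_{2^n}$, so that $\bO''$ is $\bs{H}_{2^n}$ up to permutation and $V_U = (2^n)^{2^{n-1}}$. Your direction is the one the phrase ``is attained by a circuit with all rotations'' actually demands (and the one later invoked in \cref{sec:calculations} to compute $V_U$ for all-rotation circuits), so your argument supplies a step the paper's proof leaves implicit, while the paper's direction establishes that no other $N$ can reach $V_{\max}$. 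For the minimum, your mod-$2$ reduction equivalence (the integer span of the $\{0,1\}$-representatives of $\mathrm{row}_{\Z_2^N}(\bA)$ equals $\Z^N$ iff $\mathrm{rank}_{\Z_2^N}(\bA) = N$, which forces $N = n$) is a repackaging of the paper's parity argument (rows lying in a proper subspace of $\Z_2^N$ force $\det((\bU\bA)_{\modtwo}^*)$ to be even, hence $V_U > 2^N$); both are sound, but your version explicitly covers the case $N = n$ with $\bA$ rank-deficient over $\Z_2$, which the paper's converse, phrased only as ``if $N > n$,'' technically leaves to the reader, and your forced-equality bookkeeping ($V_U = 2^n$ implies $N = n$ \emph{and} unimodularity) makes the ``only if'' direction of the characterization airtight.
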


\begin{proof}
The left most inequality is trivial since by hypothesis $n\leq N$, which implies $2^n\leq 2^N$. For the next inequality, since $|\det((\bU  \bA)^*_{\modtwo})|>1$, it follows that $2^N\leq 2^N|\det((\bU  \bA)^*_{\modtwo})| = V_U$, by \cref{eq_vol2NUA}. The third inequality follows directly from the Hadamard bound applied to the $V_U = |\det(\bO'')|$ where $\bO''$ is the the $(N+1)\times (N+1)$ matrix constructed in \cref{lem_minor}. The last inequality follows from monotonicity of the function $N \mapsto N^{N/2}$. 

For the case of minimum possible volume, recall that $\bA \in \{0,1\}^{n \times N}$ with $N \geq n$. If $N=\mathrm{rank}_{\Z_2^N}(\bA)$, then $N=n$ and $\mathrm{row}_{\Z_2^N}(\bA) =\Z_2^N$.  \cref{eq_Omrow} implies that $\bO-\bs{1}$ contains all elements of $\{-2,0\}^N$ and therefore by \cref{eq_latWA}, $V_U = 2^N$. 
Conversely, if $N>n$ the left null space of  $\bA$ in $\Z_2^N$ is not trivial, and $\mathrm{row}_{\Z_2^N}(\bA) \subset \Z_2^N$. But the set of rows of $(\bU  \bA)_{\modtwo}$ is exactly $\mathrm{row}_{\Z_2^N}(\bA)$. Therefore, by selecting rows that generate $\lat(W)$, one obtains a matrix $(\bU  \bA)^*_{\modtwo}$ that is row deficient in $\Z_2^N$, namely $
\det((\bU  \bA)^*_{\modtwo}) = 0 ~\modtwo$. But since $(\bU  \bA)^*_{\modtwo}$ is invertible in $\Z^N$, then $\det((\bU  \bA)^*_{\modtwo})$ is a multiple of two, and $V_U > 2^N$. For the case of maximum volume, note that by Hadamard's theorem $|\det(\bO'')| = V_{\max}$ only if $\bO''$ is a $2^n \times 2^n$ Hadamard matrix. The construction in the proof of \cref{lem_minor} thus implies that the number of rotations is the maximum possible $N = 2^n-1$.
\end{proof}

Note that the smallest value of the volume $V_U$ does not correspond necessarily with the smallest possible frame potential because of the presence of the factor $t^{-N/2}$ in \eqref{eq:FEtildeComm}. %\eqref{eq_boundFUBU}.
In \cref{sec:calculations} we will explore the numerical behavior of the frame potential and the expressiveness as a function of $t,n$ and $N$, and its relationship with the lattice volume. 

\subsection{Extreme Cases}\label{sec:ExtCases}

For the two extreme cases $N=n$ and $N=2^n-1$ described in  \cref{thm:maxminvol}, our probabilistic representation yields means of computing the frame potential. We first consider the case of all rotations, which as noted in~\cite{Bennink2023} can be approached as the problem of counting the number of abelian squares of size $t$ with an alphabet of size $2^n$. We include the proof here for completeness and because our approach is different from the one taken in~\cite{Bennink2023}.

\begin{theorem}\label{thm:FUExpMultinomial}
If the circuit includes all possible rotations, $N=2^n-1$, the frame potential can be computed as the expected value of a multinomial coefficient, 
\begin{equation}\label{eqn:FUExpMultinomial}
    \mathcal{F}_U(t) = \frac{1}{2^{nt}} \EXP \left[\binom{t}{B^{(1)}_t \, \cdots B^{(2^n)}_t}\right],
\end{equation}
where $B_t = (B_t^{(1)},\dots,B_t^{(2^n)})$ is a random walk with increments uniformly distributed on the canonical basis for $\R^{2^n}$.     
\end{theorem}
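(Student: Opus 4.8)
The plan is to start from the representation $\mathcal{F}_U(t) = \P(W_t = 0)$ furnished by \cref{thm_main} and to translate the event $\{W_t = 0\}$ into a statement about occupation counts of the random walk $B_t$. Recall that $W_t = \sum_{i=1}^t (K_i - K_i')$ where $K_i, K_i'$ are independent copies of $K$, and by \cref{thm_distK} with $N = 2^n - 1$ the variable $K$ is uniform over the $2^n$ distinct rows $\bk_1, \dots, \bk_{2^n}$ of $\bK$. First I would regroup the $t$ independent selections according to which row is chosen: letting $B_t^{(x)}$ count how many of the first sum's selections land on $\bk_x$, and $B_t'^{(x)}$ the analogous count for the second sum, one obtains $W_t = (B_t - B_t')\,\bK$, viewing $B_t, B_t'$ as row vectors in $\Z^{2^n}$. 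By construction each increment of $B_t$ places a unit in a uniformly chosen coordinate, so $B_t$ is precisely the random walk with increments uniform on the canonical basis of $\R^{2^n}$ appearing in the statement, and $B_t, B_t'$ are i.i.d.

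The main step is to show that, for $N = 2^n-1$, the event $\{W_t = 0\}$ coincides with $\{B_t = B_t'\}$. Set $v = B_t - B_t' \in \Z^{2^n}$; then $W_t = 0$ is exactly $v\,\bK = 0$, i.e.\ $v$ is orthogonal to every column of $\bK$. By \cref{prop_KUA} each column of $\bK$ is a column of the Sylvester Hadamard matrix $\bs{H}_{2^n}$, and when $N = 2^n-1$ the columns of $\bK$ are all columns of $\bs{H}_{2^n}$ except the all-ones column, which corresponds to the empty rotation. The crux of the argument is that the missing orthogonality relation is automatic: since $B_t$ and $B_t'$ each have coordinates summing to $t$, one has $v\cdot\bs{1}=0$, so $v$ is also orthogonal to the all-ones column. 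Hence $v$ is orthogonal to all $2^n$ mutually orthogonal columns of $\bs{H}_{2^n}$, which form a basis of $\R^{2^n}$, forcing $v = 0$ and therefore $B_t = B_t'$. I expect this to be the only delicate point, precisely because it hinges on recovering the one deleted Hadamard column from the combinatorial constraint $\sum_x B_t^{(x)} = t$.

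It then remains to compute $\P(B_t = B_t')$. Since $B_t$ and $B_t'$ are i.i.d., conditioning on the common value gives
\begin{equation*}
    \P(B_t = B_t') = \sum_{b} \P(B_t = b)^2,
\end{equation*}
where the sum runs over all compositions $b = (b^{(1)},\dots,b^{(2^n)})$ of $t$. Each selection is uniform over $2^n$ symbols, so $\P(B_t = b) = 2^{-nt}\binom{t}{b^{(1)}\,\cdots\,b^{(2^n)}}$. Substituting one factor of $\P(B_t = b)$ in the square and leaving the other as the multinomial coefficient yields
\begin{equation*}
    \P(B_t = B_t') = \frac{1}{2^{nt}}\sum_{b} \P(B_t = b)\binom{t}{b^{(1)}\,\cdots\,b^{(2^n)}} = \frac{1}{2^{nt}}\,\EXP\left[\binom{t}{B_t^{(1)}\,\cdots\,B_t^{(2^n)}}\right],
\end{equation*}
which is exactly \cref{eqn:FUExpMultinomial}.
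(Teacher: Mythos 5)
Your proposal is correct and follows essentially the same route as the paper: both reduce $\mathcal{F}_U(t)=\P(W_t=0)$ to $\P(B_t=B_t')$ by writing $W_t=(B_t-B_t')\bK$ and observing that for $N=2^n-1$ the only missing Hadamard column is the all-ones vector, which is automatically annihilated by the constraint $\sum_x B_t^{(x)}=\sum_x B_t'^{(x)}=t$, and both then finish with the identical computation $\P(B_t=B_t')=\sum_{\bs{b}}\P(B_t=\bs{b})^2=2^{-nt}\,\EXP\bigl[\binom{t}{B_t^{(1)}\cdots B_t^{(2^n)}}\bigr]$. Your orthogonal-basis phrasing of the key step is simply a restatement of the paper's argument that $\mathrm{null}(\bK^\top)$ is one-dimensional and consists of constant vectors.
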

\begin{proof}
Let $S_t$ and $S'_t$ be independent random variables distributed as $K_1+\dots+ K_t$. Then we can write $\mathcal{F}_U(t) = \P(S_t = S_t')$. Further, $S_t = B_t \bK$, where $B_t$ is a row vector of $2^n$ non-negative integers counting how many times in $S_t$ has each $\bk \in \bK$ appeared,
\[
    B_t^{(x)} = \# \{s\leq t: K_s = \bk_x\}, \quad x=1,\dots,2^n.
\]
At each time-step $t$, the row vector $B_t$ increases by one at a single entry uniformly selected from $2^n$ choices, which is equivalent to a random walk with increments on the canonical basis for $\R^{2^n}$. Also, each $B_t$ is distributed as a multinomial with $t$ trials and $2^n$ equally likely outcomes, 
\[
    \P(B_t = (b_1,\dots,b_{2^n})) = \frac{1}{2^{nt}}\binom{t}{b_1 \cdots b_{2^n}},
\]
with $b_1,\dots,b_{2^n} \in \{0,1,2,\dots\}$. Defining $B'_t$ analogously for $S_t'$, we can write,
\[
    \mathcal{F}_U(t) = \P((B_t -B'_t) \in \text{null}(\bK^\top)).
\]
Recall from \cref{prop_KUA} that $\bK$ has rank equal to $N$, so the  dimension of $\text{null}(\bK^\top)$ equals $2^n-N$. In the special case $N = 2^n-1$, $\bK$ exhausts every column of $\bs{H}_{2^n}$ except for the column containing only ones (which would correspond to a rotation gate where the identity operator is applied to each qubit). As a consequence,   $\text{null}(\bK^\top)$ has dimension one and contains only constant vectors. Further, $\sum_{x=1}^{2^n} B_t^{(x)} = \sum_{x=1}^{2^n} {B'}_t^{(x)} = t$, so $(B_t -B'_t) \in \text{null}(\bK^\top)$ is equivalent to $B_t = B'_t$, and
\begin{align}
    \mathcal{F}_U(t) &= \P(B_t = B_t') = \sum_{\bs{b}} \P(B_t = \bs{b})^2\\ 
    &= \frac{1}{2^{nt}} \EXP \left[\binom{t}{B^{(1)}_t \, \cdots B^{(2^n)}_t}\right]
\end{align}
where the sum is taken over all possible values $\bs{b}$ of $B_t$.
\end{proof}

In the case of the minimum possible number of rotations, the frame potential can be explicitly computed. Note that by the monotonicity property established in \cref{cor:monotone}, the frame potential in this case is an upper bound for the frame potential of any circuit with the same number of qubits.

\begin{theorem}
If the circuit has the minimum number of rotations $N=n$, then the frame potential is explicitly given for all $t$ by
\begin{equation*}
    \mathcal{F}(t) = \left(\frac{\Gamma(t+\tfrac{1}{2})}{\sqrt{\pi} \Gamma(t+1)}\right)^n
\end{equation*}
where $\Gamma(\cdot)$ denotes the Gamma function.
\end{theorem}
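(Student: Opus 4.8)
The plan is to reduce the statement to the recurrence probability of the walk $W$ and then exploit a coordinatewise factorization. By \cref{thm_main} we have $\mathcal{F}_U(t) = \P(W_t = 0)$, so it suffices to evaluate this probability in the minimum-rotation setting. By the characterization in \cref{thm:maxminvol}, this case corresponds to $\bA$ being a square ($N=n$) matrix of full rank over $\Z_2$, so that $\mathrm{row}_{\Z_2^N}(\bA) = \Z_2^n$. Then \cref{eq_Omrow} of \cref{thm_distK} gives $\bO = \bs{1} - 2\,\Z_2^n = \{-1,1\}^n$, and $K$ is uniform on the full sign hypercube. Since a uniform law on a product set factorizes, the coordinates $K^{(1)},\dots,K^{(n)}$ are independent, each uniform on $\{-1,1\}$.

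First I would use this independence to split the recurrence probability. With $D = K - K'$ for i.i.d.\ $K,K'$ having independent coordinates, the coordinate increments $D^{(j)} = K^{(j)} - {K'}^{(j)}$ are mutually independent, and hence so are the coordinate walks $W_t^{(j)} = \sum_{s=1}^t D_s^{(j)}$. Therefore
\[
    \mathcal{F}_U(t) = \P(W_t = 0) = \prod_{j=1}^n \P\bigl(W_t^{(j)} = 0\bigr),
\]
and the problem decouples into $n$ identical one-dimensional computations. Each increment $D^{(j)}$ takes the value $0$ with probability $\tfrac12$ and each of $\pm 2$ with probability $\tfrac14$.

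Next I would compute the single-coordinate recurrence probability by Fourier inversion on the torus, exactly as in the proof of \cref{thm_main}. The characteristic function of one increment collapses to $\EXP\, e^{i\theta D^{(j)}} = \tfrac12 + \tfrac12\cos(2\theta) = \cos^2\theta$, so
\[
    \P\bigl(W_t^{(j)} = 0\bigr) = \frac{1}{2\pi}\int_{-\pi}^{\pi} \cos^{2t}\theta \ud{\theta} = \frac{2}{\pi}\int_0^{\pi/2}\cos^{2t}\theta \ud{\theta}.
\]
The remaining Wallis-type integral is the standard Beta-function evaluation $\int_0^{\pi/2}\cos^{2t}\theta\,\ud{\theta} = \sqrt{\pi}\,\Gamma(t+\tfrac12)/(2\,\Gamma(t+1))$, which yields $\P(W_t^{(j)} = 0) = \Gamma(t+\tfrac12)/(\sqrt{\pi}\,\Gamma(t+1))$. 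Raising this to the $n$-th power through the factorization above produces the claimed formula.

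The main obstacle is really the first step: cleanly arguing that the minimum-rotation regime forces $\bO$ to be the entire sign hypercube $\{-1,1\}^n$, equivalently that it coincides with $\bA$ having full $\Z_2$-rank. This is precisely the content of \cref{thm:maxminvol}, so once that characterization is invoked the coordinate independence — and with it the whole proof — follows routinely, the integral evaluation being entirely standard.
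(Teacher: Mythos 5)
Your proof is correct, and its skeleton is the paper's own: reduce to $\P(W_t=0)$ via \cref{thm_main}, use uniformity of $K$ on the full hypercube $\{-1,1\}^n$ (from \cref{eq_Omrow} with $r=n=N$) to get i.i.d.\ coordinates, and factor $\mathcal{F}_U(t)=\P\bigl(W_t^{(1)}=0\bigr)^n$. The one genuine divergence is the final scalar computation: the paper evaluates the one-dimensional return probability combinatorially, conditioning on the number $2c_0$ of zero increments and summing multinomial terms $\frac{t!}{(2c_0)!\,(\tfrac{t-2c_0}{2})!\,(\tfrac{t-2c_0}{2})!}\,2^{-2c_0}4^{-(t-2c_0)}$ (and only writes out the even-$t$ case), whereas you invert on the torus: the increment's characteristic function is $\tfrac12+\tfrac12\cos(2\theta)=\cos^2\theta$, so $\P(W_t^{(1)}=0)=\frac{1}{2\pi}\int_{-\pi}^{\pi}\cos^{2t}\theta \ud{\theta}$, which the Wallis/Beta identity evaluates directly to $\Gamma(t+\tfrac12)/(\sqrt{\pi}\,\Gamma(t+1))$. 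Your route is arguably tidier: it reuses the same inversion mechanism underlying \cref{thm_main}, sidesteps the parity bookkeeping, and lands on the Gamma quotient without a binomial-sum identity; the paper's sum is more elementary but less transparent. One shared caveat worth noting: $N=n$ alone does not force $\mathrm{rank}_{\Z_2^N}(\bA)=n$ even under the paper's standing linear-independence assumption on the $H_j$ (take $Q_1=\{1\}$, $Q_2=\{2\}$, $Q_3=\{1,2\}$ on $n=3$ qubits: the operators $Z_1,Z_2,Z_1Z_2$ are linearly independent, yet $r=2$ and $\mathcal{F}_U(1)=2^{-r}=\tfrac14\neq\tfrac18$), so the full-rank hypothesis you import from \cref{thm:maxminvol} is genuinely needed. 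The paper's proof assumes $r=n=N$ just as tacitly, so this is a limitation of the theorem's phrasing rather than a gap in your argument — if anything, you flag the needed hypothesis more explicitly than the paper does.
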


\begin{proof} Suppose $r=n=N$. Then $K = (K^{(1)},\dots,K^{(N)})$ is a vector of independent and identically distributed random variables. To see that, note that if  $r=N$, by \cref{eq_distKj} in the proof of Theorem \ref{prop_KUA},
\begin{equation*}
    2^{-N} = \P(K^{(1)} = k^{(1)}) \cdots \P(K^{(N)} = k^{(N)}) = \P(K = \bk)
\end{equation*}
for any $\bk=(k^{(1)},\dots,k^{(N)}) \in \bK$. Likewise, the vectors $D_t$ and $W_t$ in \cref{def_WD} have independent and identically distributed components. In this case we can compute
\[
    \mathcal{F}(t) = \P(W_t = 0) = \prod_{j=1}^N \P(W_t^{(j)} = 0) = \P(W_t^{(1)}=0)^N ,
\]
where $W_t^{(1)} = D_1^{(1)} + \dots + D_t^{(1)}$ is a random walk with increments distributed as
\begin{equation*}
    \P(D_1^{(1)} = \pm1) = \frac{1}{4}, \quad \P(D_1^{(1)} = 0) = \frac{1}{2}.
\end{equation*}
We can compute $\P(W_t^{(1)} = 0)$ by counting how many elements of $\bD_t^{(1)} = \{D_1^{(1)}, \dots, D_t^{(1)}\}$ fall on each of $\{-1,0,1\}$, namely, $C_{-1}(\bD_t^{(1)})$, $C_{0}(\bD_t^{(1)})$, $C_{1}(\bD_t^{(1)})$. Consider the case when $t$ is even. Then the event $[W_t^{(1)} = 0]$ implies that $C_{0}(\bD_t^{(1)})$ is even, say $C_{0}(\bD_t^{(1)}) = 2c_0$, and that $C_{-1}(\bD_t^{(1)}) =  C_{1}(\bD_t^{(1)}) = (t-2c_0)/2$. Thus we can write the probability in terms of multinomial coefficients as follows:
\begin{align*}
    \P(W^{(1)}_t = 0) &= \sum_{c_0 = 0}^{t/2} \frac{t!}{(2c_0)! (\tfrac{t-2c_0}{2})! (\tfrac{t-2c_0}{2})!} \frac{1}{2^{2c_0}} \frac{1}{4^{t-2c_0}}\\
    &= \frac{\Gamma(t+\tfrac{1}{2})}{\sqrt{\pi} \Gamma(t+1)}
\end{align*}
which yields the exact formula for $\mathcal{F}(t)$. 
\end{proof}

%%%%%%%%%%%%%%%%%%%%%%%%%%%%%%%
\section{Illustrative Calculations}
\label{sec:calculations}

The probabilistic framework developed in \cref{sec_General} and \cref{sec_CommC} can be used to design methods for computing the frame potential and the expressiveness with different levels of accuracy. This is significant because  $\mathcal{F}_U(t)$ becomes a very small quantity as $t$ increases, making its accurate numerical computation very difficult. In this section, we present examples of such methods on circuits with Pauli-$Z$ rotations to illustrate some applications of our results, and highlight the usefulness of the numerical approximation obtained in \cref{thm_CLTBound}.

As in \cref{sec_CommC}, let $U(\btheta)$ be a parameterized commutative circuit in the form of \cref{def_Ucomm} with Hamiltonian operators given by Pauli-$Z$ rotations defined by a matrix $\bA$. Then the frame potential of $U$ is, by \cref{thm_main}, $\mathcal{F}_U(t) = \P(W_t = 0)$ with $W$ as in \cref{def_WD}. To begin with, we can use \cref{thm_distK} to compute
\begin{align}
    \mathcal{F}_U(1) &= \P(W_1 = 0) = 2^{-\mathrm{rank}_{\Z_2^N}(\bA)}\\
    \mathcal{E}_U(1) &= 2^{\mathrm{rank}_{\Z_2^N}(\bA)-n}
\end{align}
which directly relate the frame potential and expressiveness of $U$ at $t=1$ to the composition of the circuit.

For sufficiently small values of $n,N$ and $t$, the frame potential can be computed exactly by performing a $t$-fold convolution to characterize the distribution of $W_t$, or equivalently that of $S_t=K_1+\dots+K_t$ as in the proof of \cref{thm:FUExpMultinomial}. Namely, writing 
\begin{equation}\label{eq_compFUP2}
        \mathcal{F}_U(t) =  \sum_{\bs{s} \in \bO_t} \P(S_t = \bs{s})^2
\end{equation}
where $\bO_t \subset \Z^N$ is the set of all possible values of $S_t$. \cref{fig:FEexample} shows the  frame potential and expressiveness for the circuit shown in \cref{fig:ccircuit} ($n=4$, $N=5$) compared with their numerical approximations $\tilde{\mathcal{F}}_U$ and $\tilde{\mathcal{E}}_U$ of \cref{prop:FEapproxComm}. As stated in \cref{thm_CLTBound} and 
\cref{cor_exp}, the approximations improve as $t$ increases, which for this small circuit occurs at very small values of $t$.

\begin{figure}
    \centering
   \subfloat[]{\includegraphics[scale=0.6]{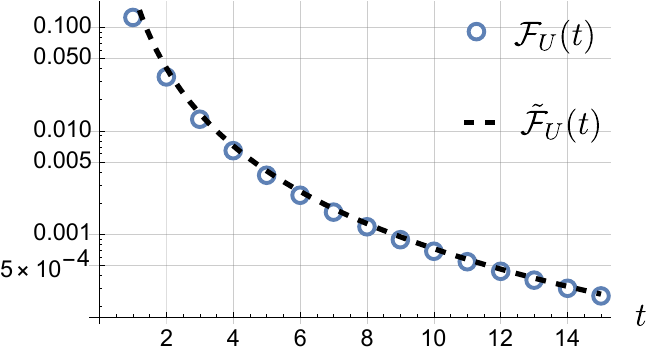}} \newline
   \subfloat[]{\includegraphics[scale=0.6]{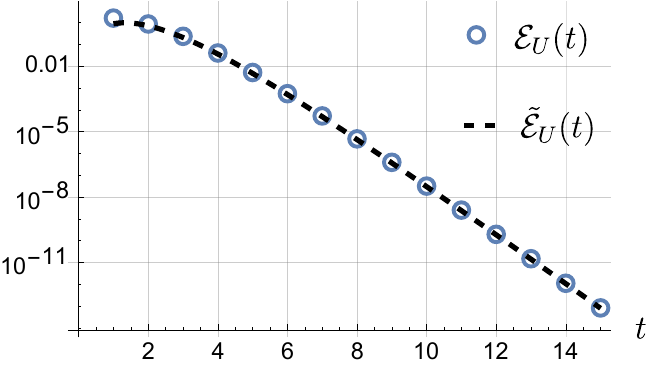}}
   \caption{Comparison in logarithmic scale between (a) the exact frame potential $\mathcal{F}_U(t)$ and (b) expressiveness $\mathcal{E}_U(t)$ with their corresponding approximations $\tilde{\mathcal{F}}_U$ and $\tilde{\mathcal{E}}_U$ for the circuit in \cref{fig:ccircuit} and $1\leq t \leq 15$. The exact quantities were computed using \cref{eq_compFUP2}.}
   \label{fig:FEexample}
\end{figure}

Another option for computing $\mathcal{F}_U$ in this framework is through sampling methods.  Note however that \cref{thm_main} is not directly useful in this regard because the random walk $W_t$ is transient and the event $[W_t = 0]$ is extremely rare. A common solution for these cases is to proceed with sequential importance sampling, namely constructing a random walk $\tilde{W}_t$ based on the increments of $W_t$, but much more likely to reach the origin. See~\citet{Tokdar2010}. Refer to the notation of \cref{def_WD} and consider for example $\tilde{W}_t = \tilde{D}_1 + \dots + \tilde{D}_t$ constructed using Algorithm 1. There, given~$\tilde{W}_s$, the increment $\tilde{D}_{s+1}$ is chosen by assigning a higher probability to those vectors $\bd$ with lowest inner product with~$\tilde{W}_s$. The resulting random walk tends to return to the origin, and once there, gets absorbed. The frame potential can thus be estimated from $M$ realizations $\{\tilde{W}^{(m)}\}_{m=1}^M$ of $\tilde{W}$ as
\begin{equation}\label{eq_FUimpSamp}
    \mathcal{F}_U(t) \approx \frac{1}{M} \sum_{m=1}^M w(\tilde W^{(m)}) \IND{[\tilde{W}^{(m)}_t = 0]}
\end{equation}
where $w(\tilde W^{(m)})$ denotes the importance weight of the $m$-th path of $\tilde{W}$. \cref{fig:FUimpSamp} shows the results of estimating $\mathcal{F}_U$ for the example circuit in \cref{fig:ccircuit} with Algorithm 1 and \cref{eq_FUimpSamp} using different values of $M$. 

\begin{figure}
    \centering
    \includegraphics[scale=0.6]{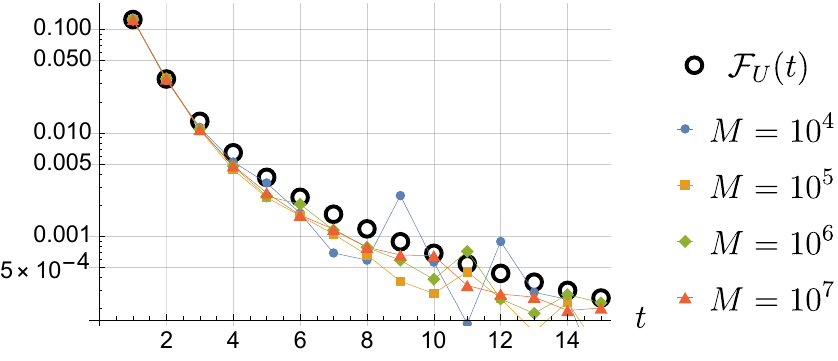}
    \caption{Results of estimating the frame potential of the circuit in \cref{fig:ccircuit} ($n=4,N=5$) using the importance sampling Algorithm 1 with different values of $M$. The results are compared in logarithmic scale with the exact calculation of $\mathcal{F}_U(t)$ in \cref{eq_compFUP2}}
    \label{fig:FUimpSamp}
\end{figure}

\begin{algorithm}
\caption{\\ Construction of $\tilde{W}_t$ via Sequential Importance Sampling}
    \begin{algorithmic}[1]
        \REQUIRE $\bs{p} = \{\P(D=\bd_l), l =1,\dots,L\}$
        \STATE sample $D_1 \sim \bs{p}$
        \STATE $\tilde{W}_1 \gets D_1$
        \FOR {$s \gets 1$ to $t-1$}
            \IF {$\tilde{W}_s = 0$}
                \STATE $\tilde{W}_{s+1} \gets 0$
            \ELSE
                \STATE $\{\tilde{\bd}_1,\dots,\tilde{\bd}_L\} \gets$ sort $\{\bd_1,\dots,\bd_L\}$ by $\bd_l \cdot \tilde{W}_t$ \STATE in ascending order
                \STATE $\tilde{\bs{p}} \gets$ sort $\bs{p}$ in descending order
                \STATE sample $\tilde{D}_{s+1} \gets \tilde{\bd}_l$ with probability $\tilde{\bs{p}}_l$
                \STATE $\tilde{W}_{s+1} \gets \tilde{W}_s + \tilde{D}_{s+1}$
            \ENDIF
        \ENDFOR
    \end{algorithmic} 
\end{algorithm}

For circuits with all rotations, $N=2^n-1$, we can estimate the frame potential by Monte Carlo sampling of the expected value in \cref{thm:FUExpMultinomial} over  realizations of the process $B$. \cref{fig:FallRots} compares the results of this method with the theoretical approximate  $\tilde{\mathcal{F}}_U(t)$. Note that by \cref{thm:maxminvol}, the volume can be computed as $V_U = (2^n)^{2^{n-1}}$. This approach provides a good estimate for the frame potential because the standard deviation of the multinomial function in \cref{eqn:FUExpMultinomial} decreases with $t$ and is of the same order of magnitude as $\mathcal{F}_U(t)$. 

\Figure[t!](topskip=0pt, botskip=0pt, midskip=0pt)[scale=0.6]{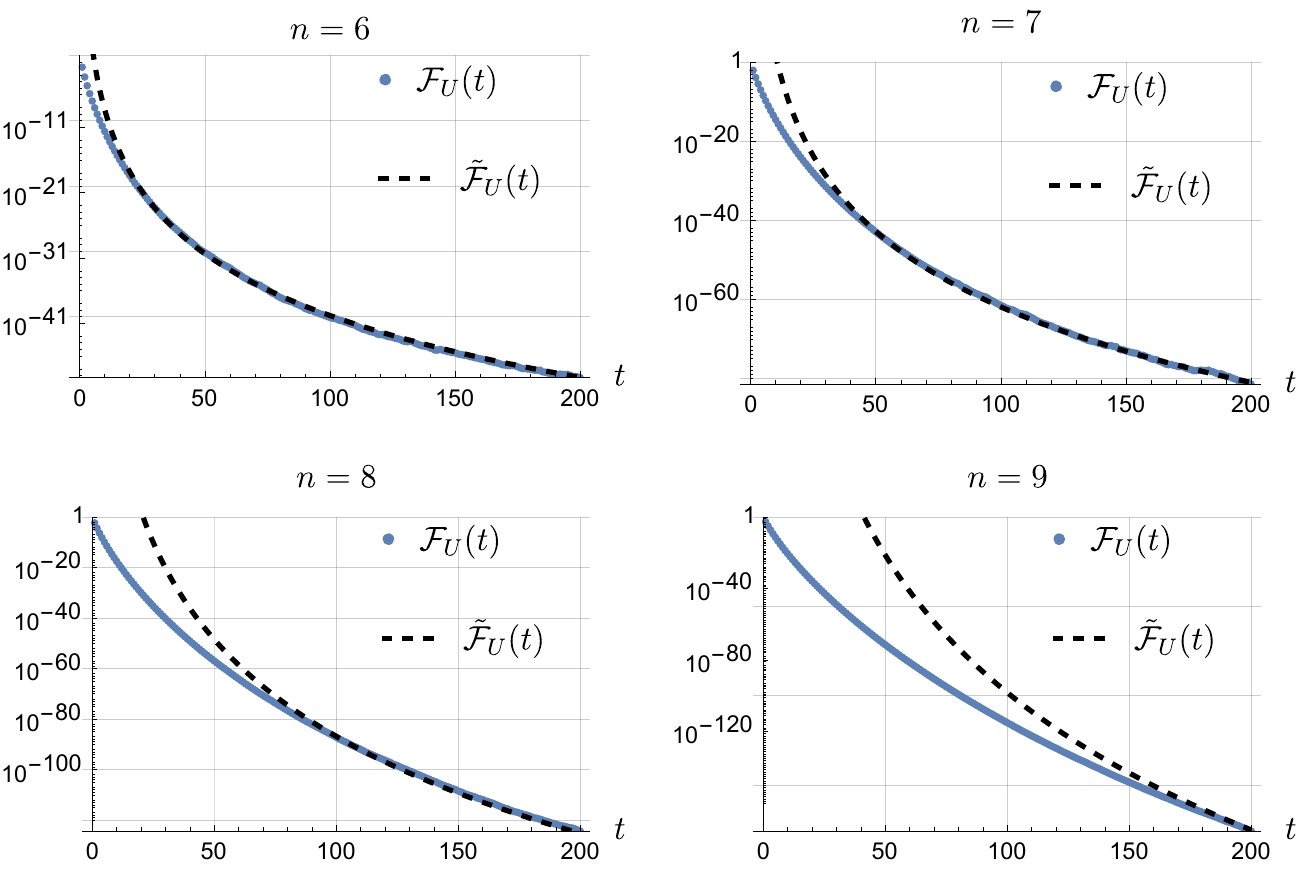}
{Comparison in logarithmic scale between $\mathcal{F}_U(t)$ and the value of $\tilde{\mathcal{F}}_U(t)$ in \cref{prop:FEapproxComm} for the circuit with all rotations ($N=2^n-1$) and four different values of the number $n$ of qubits. The values of the frame potential were estimated by sampling the expectation in \cref{thm:FUExpMultinomial} with 100 realizations of the process $B_t$ with $1\leq t \leq 200$.}  \label{fig:FallRots}

\cref{thm_CLTBound} indicates that as $N$ increases, for fixed $t$, the approximation between of $\tilde{\mathcal{F}}_U(t)$ to $\mathcal{F}_U(t)$ improves. Note that the opposite effect is observed in \cref{fig:FallRots} as the circuit increases in size. This occurs because when compared using a logarithmic scale, one gets
\begin{equation*}
   \left|  \log(\mathcal{F}_U(t)) - \log(\tilde{\mathcal{F}}_U(t)) \right| \approx \left |\frac{\mathcal{F}_U(t)}{\tilde{\mathcal{F}}_U(t)} -1 \right| < \frac{(4 \pi)^{N/2} c}{V_U ~ t}
\end{equation*}
by using a first order Taylor expansion and \cref{thm_CLTBound}. The decreasing behavior on the approximation error \[|\mathcal{F}_U(t) - \tilde{\mathcal{F}}_U(t)|\] is shown in \cref{fig:allRotsError}. The plots show evidence of the dependence of the constant $c$ in \cref{thm_CLTBound} on the size of the circuit. 

\begin{figure}
    \centering
    \includegraphics[scale=0.6]{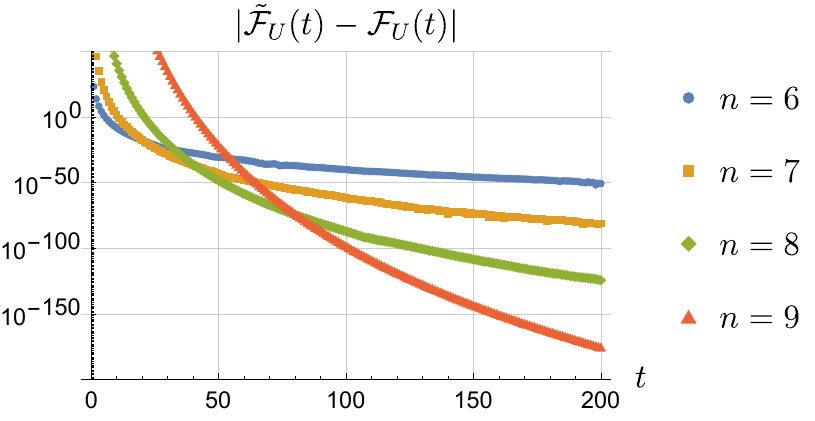}
    \caption{Logarithmic plot of the error in the approximation $|\mathcal{F}_U(t) - \tilde{\mathcal{F}}_U(t)|$ for the four circuits with all rotations ($N=2^n-1$) depicted in \cref{fig:FallRots}. The curves fit precisely the theoretical bound $c t^{-N/2+1}$ in \cref{thm_CLTBound} with $c = 56.1, 153.5, 390.1$ and $953.1$ respectively. }
    \label{fig:allRotsError}
\end{figure}

We now undertake a more comprehensive analysis of the behavior of the frame potential through the approximate formula $\tilde{\mathcal{F}}_U(t)$ obtained in \cref{thm_CLTBound}, and in particular, of the lattice volume $V_U$. First note that by \cref{lem_minor} and \cref{thm:maxminvol}, if we denote by $v_U$ the volume of the lattice $\lat(\frac{1}{2} W)$, then
\begin{equation}\label{def_vU}
    V_U = 2^N v_U, \quad v_U=|\det((\bU \bA)_{\modtwo}^*)|
\end{equation}
where $1\leq v_U \leq 2^{-N}(N+1)^{\frac{N+1}{2}}$.

For small values of $n$ we can actually enumerate all possible circuits and characterize the possible values of $v_U$. \cref{fig:allVols}a shows the results of that exercise for all 32192 circuits with $n=4$ qubits and $N\in \{4,\dots,15\}$ rotations. For $n=6$ qubits, there are close to $10^{19}$ possible circuits wih $N\in\{6,\dots,63\}$. In order to explore the values of $v_U$ for this case, we resort to randomly sampling circuits of each $N$ and tallying them according to their volume. The results are shown in \cref{fig:allVols}. In practice, $V_U$ or $v_U$ can be computed by first applying a lattice reduction algorithm on the rows of $(\bO -\bs{1})$ or $(\bU \bA)_{\modtwo}$, and then computing the determinant.

\begin{figure*}
    \centering
   \subfloat[]{\includegraphics[scale=0.6]{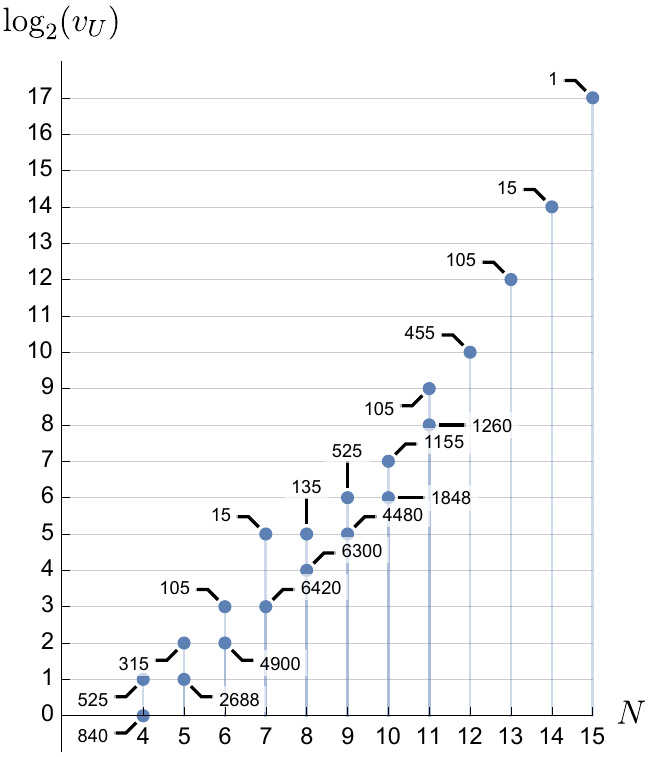}}
   \subfloat[]{\includegraphics[scale=0.6]{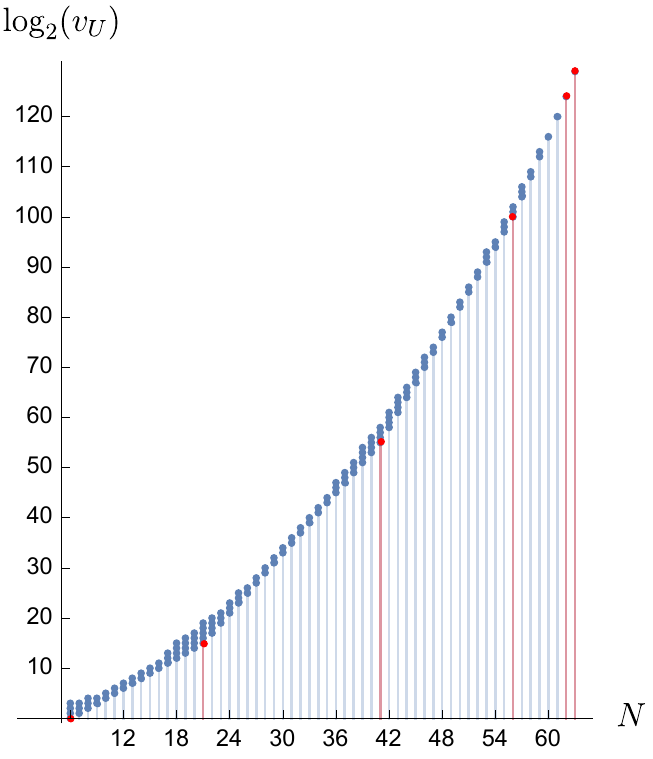}}
   \caption{Values of $v_U$ for circuits having all possible values of $N$ for (a) $n=4$ and (b) $n=6$. In all cases, $v_U$ is a power of two. In (a) all possible circuits are listed and the callout numbers indicate how many circuits of a given $N$ have each $v_U$ (for example, with $N=5$ rotations there are a total of 3003 possible circuits with 2688 and 315 circuits having $v_U = 1$ and $v_U = 4$, respectively). In (b) we sampled 1000 random circuits for each $N$ and tallied the observed values of $v_U$. The red vertical points mark circuits that use all rotations of up to $n'$ qubits, with $n'=1,\dots,n$.}
   \label{fig:allVols}
\end{figure*}

As a result of the wide variability of the lattice volume, the frame potential varies over several orders of magnitude when considering circuits of different $N$ for a given number $n$ of qubits. \cref{fig:FUtildeBounds} shows the approximate frame potential $\tilde{\mathcal{F}}_U$ for $n=6$ and different values of $N$, along with bounds derived from \cref{thm:maxminvol}. The range of variation of $V_U$ for a given $N$ is very small when compared to the range of possible values of the volume across multiple $N$s. Further, the tightness of the bounds for $\tilde{\mathcal{F}}_U$ vary greatly with $N$.  

\Figure[t!](topskip=0pt, botskip=0pt, midskip=0pt)[scale=0.6]{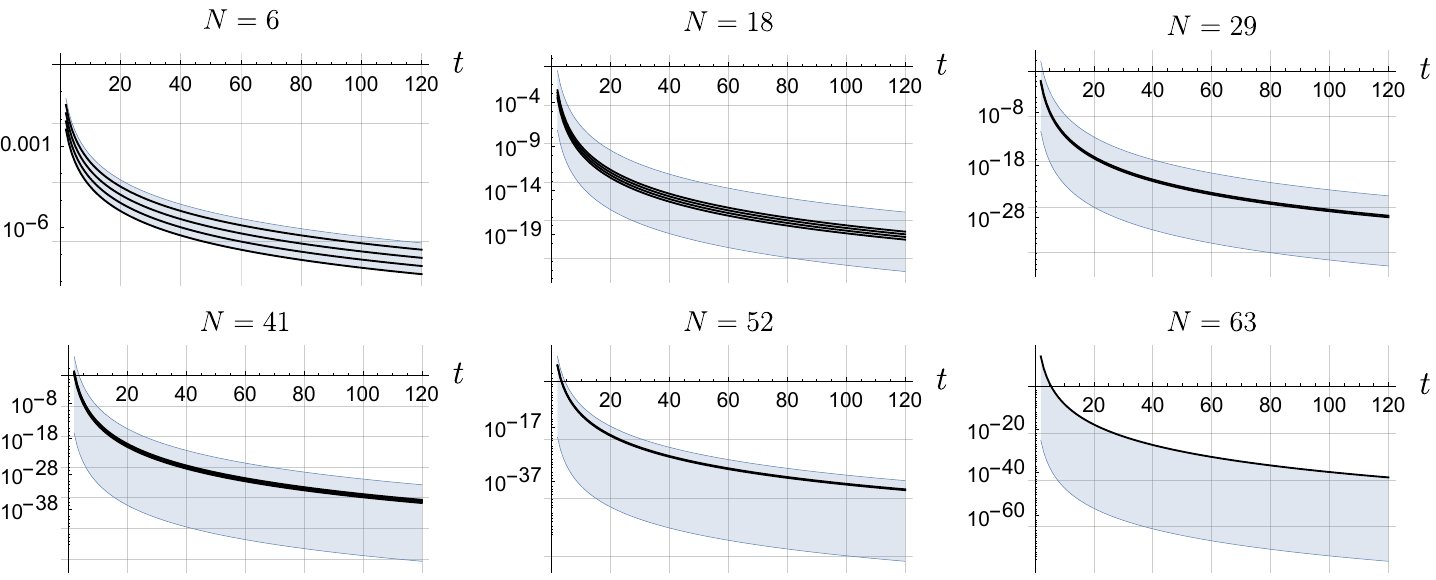}
{Logarithmic plot of $\tilde{\mathcal{F}}_U(t)$ for circuits of selected $N$s from \cref{fig:allVols}b. Each plot contains as many curves for each $N$ as values where observed for the volume. The blue area marks the bounds on $\tilde{\mathcal{F}}_U(t)$ corresponding to the bounds for $V_U$ in \cref{thm:maxminvol}. Note from \cref{fig:allVols}, case $n=6$, that  $\tilde{\mathcal{F}}_U(t)$ gives accurate information about the behavior of $\mathcal{F}_U(t)$ for all values of~$t$.}
    \label{fig:FUtildeBounds}

As a practical matter, one would like to construct the most expressive quantum circuit for a given amount of resources. Given $n$, \cref{cor:monotone} indicates that having more operators always yields more expressive circuits. It is thus important to know by how much $\mathcal{E}_U$ increases as $N$ increases, which by \cref{thm_CLTBound} can be related to the behavior of the smallest value of $v_U$ as a function of $N$. Consider operators $U_N$ and $U_{N+1}$ acting on $n$ qubits with $N$ and $N+1$ rotations, respectively. We can approximate, for large enough $t$,
\begin{align}\label{eq_EUdimRet}
    1 \leq \frac{\mathcal{E}_{U_{N+1}}(t)}{\mathcal{E}_{U_{N}}(t)} \approx  \sqrt{\pi t}~ \frac{v_{U_{N}}}{v_{U_{N+1}}}.
\end{align}
\cref{fig:rationsEut}a shows the behavior of the last factor in \cref{eq_EUdimRet} computed over circuits of minimum value of $v_U$ of each $N$ in the case $n=6$ from \cref{fig:allVols}b. According to this behavior, there is a `diminishing return' effect on the expressiveness as more rotations are added, and for all $t$. This effect can also be observed in the approximate expressiveness plot in \cref{fig:rationsEut}b. The values where the gain in expressiveness is reduced the most occur precisely at circuits that include ``all rotations of up to $n'$ qubits'' with $n' =1,\dots,n$, a commonly used choice. The red lines in \cref{fig:allVols}b show these circuits, indicating that their value of $v_U$ is minimum for their corresponding $N$. In fact, we observe that for any $N$, one way to ensure minimum volume is to include all rotations up to $n'$ qubits, for the highest possible $n'$. For example, and referring to \cref{fig:allVols}b, every circuit with $N\in \{21,\dots,40\}$ that includes all 21 rotations of up to $n'=2$ qubits, has minimum volume for its corresponding $N$.

\begin{figure*}
\centering
   \subfloat[]{\includegraphics[scale=0.6]{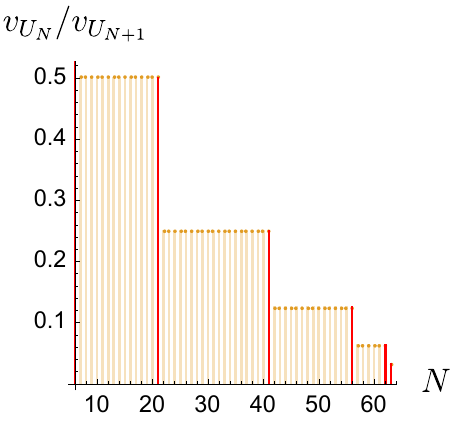}}
   \subfloat[]{\includegraphics[scale=0.6]{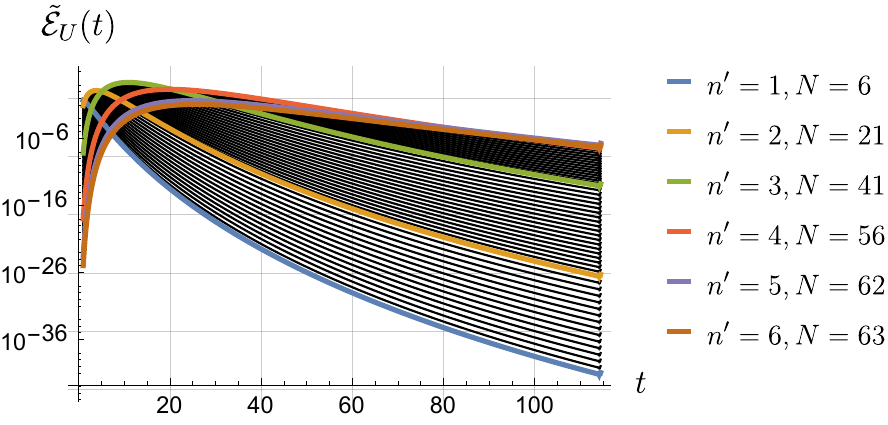}}
   \caption{(a) Ratios between the minimum values of $v_U$ for circuits of consecutive values of $N$, $N=1,\dots,2^n-1$, for $n=6$ as in \cref{fig:allVols}b. The red lines mark the values of $N$ of circuits with all rotations up to $n'$ qubits, $n'=1,\dots,n$. (b) Logarithmic plot of the approximate expressiveness for circuits of minimum value of $v_U$ of each value of $N$ with $n=6$. Circuits with all rotations up to $n'$ qubits are marked in color.}
   \label{fig:rationsEut}
\end{figure*}

%%%%%%%%%%%%%%%%%%%%%%%%%%%%%%%
\section{Discussion and Outlook }
\label{sec:outlook}

In this study, we conducted a detailed analysis of the frame potential and expressiveness in parametric commutative circuits with integer eigenvalues. Our techniques are probabilistic, based on the observation that for the corresponding operator $U(\btheta)$, the Fourier expansion of the function $f_U(\btheta) = \langle \psi_0 | U(\btheta) | \psi_0 \rangle$ in \cref{def_fU} can be written as the characteristic function of a random vector $K$ with integer entries. As a consequence, \cref{thm_main} shows that the frame potential can be expressed as $\mathcal{F}_U(t) = \P(W_t = 0)$ where $W$ is a random walk on $\Z^N$, with $N$ being the number of Hamiltonian operators in $U$. Our key result, \cref{thm_CLTBound}, provides an approximate formula for the frame potential that is useful for estimating the expressiveness. It also reveals the existence of a quantity $V_U$ that greatly influences the order of magnitude of expressiveness. In our formulation, $V_U$ is the volume of the sub-lattice of $\Z^N$ where the random walk $W$ takes values.

Focusing on commutative circuits composed of Pauli-$Z$ rotations, \cref{sec_CommC} uses the probabilistic representation to understand the relationship between the architecture of the circuit, its frame potential, and expressiveness. \cref{thm_distK} and \cref{prop_relate2A} explicitly characterize the distribution of $K$ and the lattice of $W$ in terms of the matrix $\bA$ which explicitly encodes the circuit's architecture. As a result, in \cref{lem_minor} and \cref{thm:maxminvol} we obtain expressions to compute and bound $V_U$. In addition, for circuits that include all rotations with $n$, we derive a further probabilistic representation for $\mathcal{F}_U$ that generalizes the combinatorial approach of~\cite{Bennink2023}, and provides a useful sampling method for the calculation of expressiveness.

The numerical examples presented in \cref{sec:calculations} illustrate that with Pauli-$Z$ rotations, some of the practical insights that can be obtained using our results. We provide different approaches to compute or estimate the frame potential to different levels of accuracy, and we show how our approximate formula $\tilde{\mathcal{E}}_U(t)$ allows for accurate calculation of the expressiveness for $t$-values that were previously unfeasible.

The volume $V_U$, or its associated $v_U$ in \cref{def_vU}, appear to be natural indicators of expressiveness for commutative quantum circuits. Its computation is generally straightforward given the spectrum $\bO$ and, as shown in \cref{sec:calculations}, its value can vary over several orders of magnitude. The results shown in \cref{fig:allVols}, along with others not included here, allow us to make some observations and conjectures. First, we observe that $v_U$ seems to always be a power of two, growing to very large numbers as $N$ increases. However, $v_U$ does not take all possible powers of two, and for a given $N$, multiple values of $v_U$ can be possible, spanning various orders of magnitude. This is significant because it means that given $n$ and $N$, the choice of rotations can directly affect the expressiveness, and one would like criteria to always select an architecture with minimum volume. Along these lines, we observe that for any given number of resources $n,N$, it is sufficient to design the circuit to have all rotations up to some number of qubits, which in practice is the easiest architecture to implement.  

A rigorous characterization of the possible values of $V_U$  is, by \cref{lem_minor}, equivalent to characterizing the minors of the Hadamard matrix $\bs{H}_{2^n}$ that arise from arbitrarily selecting columns, and rows generating a given lattice. Calculating minors from general Hadamard matrices is a problem of great interest dating back to 1907. In fact, the possible values of minors of order $n-4$ of a Hadamard matrix can be computed~\citep[see][ Theorem 5]{Kravv2016}, and theoretically generalized to $n-j$ minors, for any $j<n$. However, this generalization is currently impossible to implement due to the complexity of the computation. For $N\in \{2^n-4,\dots,2^n-1\}$, existing characterizations of the minors of $\bs{H}_{2^n}$ yield the specific possible values that $V_U$ can take. Further explorations of the behavior of the volume $V_U$ will reveal interesting and informative patterns on how expressiveness can be managed through careful selection of circuit architecture.

The results in \cref{sec_General} apply to any commutative quantum circuit of the form \eqref{def_U}. However, their applications to obtain estimates to the frame potential via \cref{thm_CLTBound} require the characterization of the distribution of $K$, which in \cref{sec_CommC} is carried out for circuits comprising only of Pauli rotations. The analysis for the case when $\{H_j\}_{j=1}^N$ are commuting Hamiltonians in the Pauli group $\mathbb{S}_n$ is the subject of a forthcoming note. \citep[See][]{Yu2024}.    

A natural question that arises is whether the probabilistic framework presented here can be extended to the case of parametric quantum circuits built with Hamiltonian operators that are not simultaneously diagonalizable, and hence non-commutative. This extension would allow, as was done here, for the use of techniques in probability theory to analyze a wider class of parametric quantum circuits. This is the subject of future research but, to close, we now present some interesting preliminary observations regarding this matter.

\subsection{Non-commutative Circuits}

Consider a unitary operator $U(\btheta)$ as in \cref{def_U} where $\{H_j\}_{j=1}^N$ are not commuting Hamiltonians. In this case, the simplification of the fidelity in \cref{eq_FU2Dto1D} is not possible and we must instead analyze the expression
\begin{equation}\label{eq_fUNonComm}
    f_U(\btheta, \btheta') = \langle \psi_0 | U(\btheta)^\dagger U(\btheta') | \psi_0 \rangle, \quad \btheta,\btheta' \in [-\pi,\pi]^N.
\end{equation}
A $2N$-dimensional Fourier series for $f_U$ can then be computed as
\begin{equation}\label{eq: Fourier_sum_noncomm}
    f_U(\btheta, \btheta') =\sum_{\bk \in \bO, \bk'\in\bO'} \hat{f}_U(\bk, \bk') e^{i \btheta \cdot \bk} e^{ i\btheta'\cdot \bk'},
\end{equation}
where the coefficients are
\begin{align*}
    \hat{f}_U&(\bk,\bk')\\ 
    &= \frac{1}{(2 \pi)^{2N}} \iint\limits_{[-\pi,\pi]^{2N}}  f_U(\btheta,\btheta') e^{-i \btheta'\cdot \bk'} e^{-i \btheta\cdot \bk} \ud \btheta \ud \btheta'.
\end{align*}
A probabilistic representation analogous to the one derived in \cref{sec_General} will follow whenever $\hat{f}_U(\bk,\bk') \in [0,1]$ for all pairs $(\bk,\bk') \in \bO\times \bO'$ and $\sum_{\bk \in \bO, \bk'\in\bO'} \hat{f}_U(\bk, \bk') = 1$. In this case, we can write
\begin{equation}\label{eq:ProbRepNC}
     f_U(\btheta, \btheta') = \EXP e^{i \btheta \cdot K} e^{i \btheta' \cdot K'}
\end{equation}
where $(K,K')$ is a random $2N$-dimensional vector on $\bO\times \bO'$ with distribution $\P(K = \bk, K' = \bk') = \hat{f}_U(\bk, \bk')$. 

To establish the possibility of the representation \cref{eq:ProbRepNC} for non-commutative circuits, we considered all circuits of the form shown in \cref{fig:example_noncommGen}. Namely, with $n=N=2$ and Hamiltonians given by \cref{def_HPauli} where now each $O^{(j)}_m$ can be the identity operator or one of the three Pauli matrices. There are a total of 256 possible such circuits, out of which 120 are non-commutative in the sense that $H_1H_2 \neq H_2 H_1$.

\Figure[t!](topskip=0pt, botskip=0pt, midskip=0pt){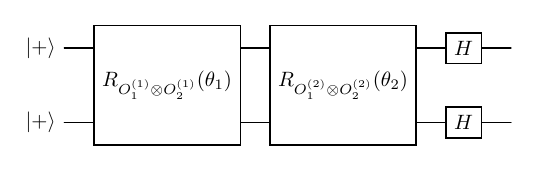}
{Circuits with $n=2$ qubits and $N=2$ rotations, and $\psi_0 = |+\rangle$. The Hamiltonians are $H_j =O^{(j)}_1\otimes O^{(j)}_2$ with $O^{(j)}_m \in \{I_2,X,Y,Z\}$ for each $j,m=1,2$.\label{fig:example_noncommGen}}

By explicitly computing the Fourier series representation for each non-commutative circuit, we identified that 72 of them have the representation \cref{eq:ProbRepNC}, whereas the remaining 48 did not because at least one of the coefficients $\hat{f}_U(\bk, \bk')$ is not a real number in $[0,1]$. \cref{fig:example_noncomm} shows one example from each class. 

\begin{figure}
    \centering
   \subfloat[]{\includegraphics[scale=1]{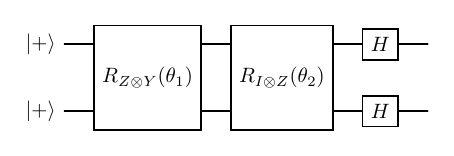}} \newline
   \subfloat[]{\includegraphics[scale=1]{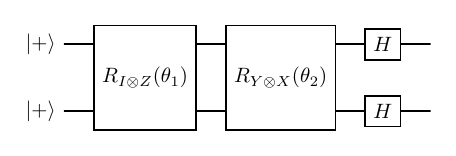}}
   \caption{Non-commutative circuit examples with $n=N=2$ that (a) admit and (b) do not admit, the probabilistic representation in \cref{eq:ProbRepNC} for their quantum expectation.}
   \label{fig:example_noncomm}
\end{figure}

For the circuit on the left in \cref{fig:example_noncomm},
\begin{align*}
f_{U_{\text{left}}}(\btheta,\btheta') &= \frac{1}{4} e^{-i \left(-\theta '_1-\theta '_2+\theta _1+\theta
   _2\right)}+\frac{1}{4} e^{i \left(-\theta '_1-\theta '_2+\theta
   _1+\theta _2\right)}\\&+\frac{1}{4} e^{-i \left(-\theta '_1+\theta
   '_2+\theta _1-\theta _2\right)}+\frac{1}{4} e^{i \left(-\theta
   '_1+\theta '_2+\theta _1-\theta _2\right)},
\end{align*}
which has $\bO =\bO'=\{ (-1,-1), (1,1), (-1,1), (1,-1) \}$. The circuit on the right has
\begin{align*}
f_{U_{\text{right}}}(\btheta,\btheta') & = 
\frac{1}{4} e^{-i \left(-\theta '_1-\theta '_2+\theta _1+\theta
   _2\right)}
   +\frac{1}{4} e^{i \left(-\theta '_1-\theta '_2+\theta
   _1+\theta _2\right)}
   \\&-\frac{i}{4} e^{-i \left(\theta '_1-\theta
   '_2+\theta _1+\theta _2\right)}
   -\frac{i}{4} e^{i \left(\theta
   '_1-\theta '_2+\theta _1+\theta _2\right)}\\
   & +\frac{1}{4} e^{-i
   \left(-\theta '_1+\theta '_2+\theta _1-\theta _2\right)}
   +\frac{1}{4}
   e^{i \left(-\theta '_1+\theta '_2+\theta _1-\theta
   _2\right)}
   \\&+\frac{i}{4} e^{-i \left(\theta '_1+\theta '_2+\theta
   _1-\theta _2\right)}
   +\frac{i}{4} e^{i \left(\theta '_1+\theta
   '_2+\theta _1-\theta _2\right)},
\end{align*}
which clearly does not admit the representation \cref{eq:ProbRepNC}. These examples highlight the need for additional investigation to determine which (sub)-classes of non-commutative circuits naturally lend itself to a probabilistic interpretation, as in the commutative case. In particular, we seek to understand under what circumstances or algebraic structures does a non-commuting circuit of Pauli rotations results in a Fourier series with a probability mass function as its weights.

\section*{Acknowledgements}
Support for this work came from the DOE Advanced Scientific Computing Research (ASCR) Accelerated Research in Quantum Computing (ARQC) Program under field work proposal 3ERKJ354.

\bibliographystyle{IEEEtranN}

\bibliography{expressiveness.bib}

% Generated by IEEEtranN.bst, version: 1.14 (2015/08/26)
\begin{thebibliography}{29}
\providecommand{\natexlab}[1]{#1}
\providecommand{\url}[1]{#1}
\csname url@samestyle\endcsname
\providecommand{\newblock}{\relax}
\providecommand{\bibinfo}[2]{#2}
\providecommand{\BIBentrySTDinterwordspacing}{\spaceskip=0pt\relax}
\providecommand{\BIBentryALTinterwordstretchfactor}{4}
\providecommand{\BIBentryALTinterwordspacing}{\spaceskip=\fontdimen2\font plus
\BIBentryALTinterwordstretchfactor\fontdimen3\font minus
  \fontdimen4\font\relax}
\providecommand{\BIBforeignlanguage}[2]{{%
\expandafter\ifx\csname l@#1\endcsname\relax
\typeout{** WARNING: IEEEtranN.bst: No hyphenation pattern has been}%
\typeout{** loaded for the language `#1'. Using the pattern for}%
\typeout{** the default language instead.}%
\else
\language=\csname l@#1\endcsname
\fi
#2}}
\providecommand{\BIBdecl}{\relax}
\BIBdecl

\bibitem[Abhijith et~al.(2022)Abhijith, Adedoyin, Ambrosiano, Anisimov, Casper,
  et~al.]{Abhijith2022}
J.~Abhijith, A.~Adedoyin, J.~Ambrosiano, P.~Anisimov, W.~Casper \emph{et~al.},
  ``Quantum algorithm implementations for beginners,'' vol.~3, no.~4, Jul.
  2022.

\bibitem[Preskill(2018)]{Preskill2018}
J.~Preskill, ``Quantum computing in the {NISQ} era and beyond,''
  \emph{Quantum}, vol.~2, p.~79, 2018.

\bibitem[Cerezo et~al.(2021)Cerezo, Arrasmith, Babbush, Benjamin, Endo,
  et~al.]{Cerezo2021}
M.~Cerezo, A.~Arrasmith, R.~Babbush, S.~C. Benjamin, S.~Endo \emph{et~al.},
  ``Variational quantum algorithms,'' \emph{Nature {R}eviews {P}hysics},
  vol.~3, no.~9, pp. 625--644, 2021.

\bibitem[Grimsley et~al.(2019)Grimsley, Economou, Barnes, and
  Mayhall]{Grimsley2019}
H.~R. Grimsley, S.~E. Economou, E.~Barnes, and N.~J. Mayhall, ``An adaptive
  variational algorithm for exact molecular simulations on a quantum
  computer,'' \emph{Nature {C}ommunications}, vol.~10, no.~1, p. 3007, 2019.

\bibitem[Chowdhury et~al.(2020)Chowdhury, Low, and Wiebe]{Chowdhury2020}
A.~N. Chowdhury, G.~H. Low, and N.~Wiebe, ``A variational quantum algorithm for
  preparing quantum {G}ibbs states,'' \emph{ar{X}iv:2002.00055}, 2020.

\bibitem[Skolik et~al.(2022)Skolik, Jerbi, and Dunjko]{Skolik2022}
A.~Skolik, S.~Jerbi, and V.~Dunjko, ``Quantum agents in the gym: a variational
  quantum algorithm for deep q-learning,'' \emph{Quantum}, vol.~6, p. 720,
  2022.

\bibitem[Du et~al.(2020)Du, Hsieh, Liu, and Tao]{Du2020}
Y.~Du, M.-H. Hsieh, T.~Liu, and D.~Tao, ``Expressive power of parametrized
  quantum circuits,'' \emph{Physical Review Research}, vol.~2, no.~3, p.
  033125, 2020.

\bibitem[Sim et~al.(2019)Sim, Johnson, and Aspuru-Guzik]{Sim2019}
S.~Sim, P.~D. Johnson, and A.~Aspuru-Guzik, ``Expressibility and entangling
  capability of parameterized quantum circuits for hybrid quantum-classical
  algorithms,'' \emph{Advanced Quantum Technologies}, vol.~2, no.~12, p.
  1900070, 2019.

\bibitem[Holmes et~al.(2022)Holmes, Sharma, Cerezo, and Coles]{Holmes2022}
Z.~Holmes, K.~Sharma, M.~Cerezo, and P.~J. Coles, ``Connecting ansatz
  expressibility to gradient magnitudes and barren plateaus,'' \emph{{PRX}
  {Q}uantum}, vol.~3, no.~1, p. 010313, 2022.

\bibitem[Haug et~al.(2021)Haug, Bharti, and Kim]{Haug2021}
T.~Haug, K.~Bharti, and M.~Kim, ``Capacity and quantum geometry of parametrized
  quantum circuits,'' \emph{{PRX} {Q}uantum}, vol.~2, no.~4, p. 040309, 2021.

\bibitem[Morales et~al.(2020)Morales, Biamonte, and Zimbor{\'a}s]{Morales2020}
M.~E. Morales, J.~D. Biamonte, and Z.~Zimbor{\'a}s, ``On the universality of
  the quantum approximate optimization algorithm,'' \emph{Quantum Information
  Processing}, vol.~19, pp. 1--26, 2020.

\bibitem[Haghshenas et~al.(2022)Haghshenas, Gray, Potter, and
  Chan]{Haghshenas2022}
R.~Haghshenas, J.~Gray, A.~C. Potter, and G.~K.-L. Chan, ``Variational power of
  quantum circuit tensor networks,'' \emph{Physical {R}eview {X}}, vol.~12,
  no.~1, p. 011047, 2022.

\bibitem[Larocca et~al.(2023)Larocca, Ju, Garc{\'\i}a-Mart{\'\i}n, Coles, and
  Cerezo]{Larocca2023}
M.~Larocca, N.~Ju, D.~Garc{\'\i}a-Mart{\'\i}n, P.~J. Coles, and M.~Cerezo,
  ``Theory of overparametrization in quantum neural networks,'' \emph{Nature
  {C}omputational {S}cience}, vol.~3, no.~6, pp. 542--551, 2023.

\bibitem[Shepherd and Bremner(2009)]{Shepherd2009}
D.~Shepherd and M.~J. Bremner, ``Temporally unstructured quantum computation,''
  \emph{Proceedings of the Royal Society A: Mathematical, Physical and
  Engineering Sciences}, vol. 465, no. 2105, pp. 1413--1439, 2009.

\bibitem[Bremner et~al.(2011)Bremner, Jozsa, and Shepherd]{Bremner2011}
M.~J. Bremner, R.~Jozsa, and D.~J. Shepherd, ``Classical simulation of
  commuting quantum computations implies collapse of the polynomial
  hierarchy,'' \emph{Proceedings of the Royal Society A: Mathematical, Physical
  and Engineering Sciences}, vol. 467, no. 2126, pp. 459--472, 2011.

\bibitem[Bremner et~al.(2016)Bremner, Montanaro, and Shepherd]{Bremner2016}
M.~J. Bremner, A.~Montanaro, and D.~J. Shepherd, ``Average-case complexity
  versus approximate simulation of commuting quantum computations,''
  \emph{Physical {R}eview {L}etters}, vol. 117, no.~8, p. 080501, 2016.

\bibitem[Havl{\'\i}{\v{c}}ek et~al.(2019)Havl{\'\i}{\v{c}}ek, C{\'o}rcoles,
  Temme, Harrow, et~al.]{Havlicek2019}
V.~Havl{\'\i}{\v{c}}ek, A.~D. C{\'o}rcoles, K.~Temme, A.~W. Harrow
  \emph{et~al.}, ``Supervised learning with quantum-enhanced feature spaces,''
  \emph{Nature}, vol. 567, no. 7747, pp. 209--212, 2019.

\bibitem[Vidal and Theis(2018)]{Vidal2018}
J.~G. Vidal and D.~O. Theis, ``Calculus on parameterized quantum circuits,''
  \emph{ar{X}iv:1812.06323}, 2018.

\bibitem[Casas and Cervera-Lierta(2023)]{Casas2023}
B.~Casas and A.~Cervera-Lierta, ``Multidimensional fourier series with quantum
  circuits,'' \emph{Physical {R}eview {A}}, vol. 107, no.~6, p. 062612, 2023.

\bibitem[Schuld et~al.(2021)Schuld, Sweke, and Meyer]{Schuld2021}
M.~Schuld, R.~Sweke, and J.~J. Meyer, ``Effect of data encoding on the
  expressive power of variational quantum-machine-learning models,''
  \emph{Physical Review A}, vol. 103, no.~3, p. 032430, 2021.

\bibitem[Bochner(2005)]{Bochner2005}
S.~Bochner, \emph{Harmonic analysis and the theory of probability}.\hskip 1em
  plus 0.5em minus 0.4em\relax Courier Corporation, 2005.

\bibitem[Williams(1991)]{Williams1991}
D.~Williams, \emph{Probability with martingales}.\hskip 1em plus 0.5em minus
  0.4em\relax {Cambridge University Press}, 1991.

\bibitem[Gikhman and Skorokhod(2004)]{GikhmanSkorohod2004}
I.~I. Gikhman and A.~V. Skorokhod, \emph{The theory of stochastic processes
  II}.\hskip 1em plus 0.5em minus 0.4em\relax Springer Science \& Business
  Media, 2004.

\bibitem[Lawler(2018)]{Lawler2006}
G.~F. Lawler, \emph{Introduction to stochastic processes}.\hskip 1em plus 0.5em
  minus 0.4em\relax CRC Press, 2018.

\bibitem[Horadam(2012)]{Horadam2012}
K.~J. Horadam, \emph{Hadamard matrices and their applications}.\hskip 1em plus
  0.5em minus 0.4em\relax Princeton {U}niversity {P}ress, 2012.

\bibitem[Kravvaritis(2016)]{Kravv2016}
C.~D. Kravvaritis, ``Hadamard matrices: insights into their growth factor and
  determinant computations,'' \emph{Mathematical Analysis, Approximation Theory
  and Their Applications}, pp. 383--415, 2016.

\bibitem[Bennink(2023)]{Bennink2023}
R.~Bennink, ``\BIBforeignlanguage{EN}{Counting abelian squares efficiently for
  a problem in quantum computing},'' \emph{\BIBforeignlanguage{EN}{Journal of
  Combinatorics}}, vol.~14, no.~4, pp. 445--459, 2023, publisher: International
  Press of Boston.

\bibitem[Tokdar and Kass(2010)]{Tokdar2010}
S.~T. Tokdar and R.~E. Kass, ``Importance sampling: a review,'' \emph{Wiley
  Interdisciplinary Reviews: Computational Statistics}, vol.~2, no.~1, pp.
  54--60, 2010.

\bibitem[Yu et~al.(2024)Yu, Ramirez, and Wong]{Yu2024}
R.~Yu, J.~Ramirez, and E.~Wong, ``Probabilistic representation of commutative
  quantum circuit models,'' \emph{ar{X}iv:2410.19173}, 2024.

\end{thebibliography}

\EOD

\end{document}